\DeclareMathOperator{\Tr}{Tr}
\newcommand{\bmu}{\boldsymbol{\mu}}
\begin{document}

\newcommand\hreff[1]{\href {http://#1} {\small http://#1}}
\newcommand\trm[1]{{\bf\em #1}}\newcommand\prf{\paragraph{Proof.}}

\newtheorem{thr}{Theorem} 
\newtheorem{lem}{Lemma}
\newtheorem{prp}{Proposition}

\newtheorem{thm}{Theorem} 
\newtheorem{lmm}{Lemma}
\newtheorem{cor}{Corollary}
\newtheorem{con}{Conjecture} 

\newtheorem{blk}{Block}
\newtheorem{dff}{Definition}
\newtheorem{asm}{Assumption}
\newtheorem{rmk}{Remark}
\newtheorem{clm}{Claim}
\newtheorem{example}{Example}

\newcommand\floor[1]{{\lfloor#1\rfloor}}\newcommand\ceil[1]{{\lceil#1\rceil}}

\newcommand{\lea}{<^{+}}
\newcommand{\gea}{>^{+}}
\newcommand{\eqa}{=^{+}}
\renewcommand{\lem}{<^{\ast}}
\newcommand{\gem}{>^{\ast}}
\newcommand{\eqm}{=^{\ast}}
\newcommand{\lel}{<^{\log}}
\newcommand{\gel}{>^{\log}}
\newcommand{\eql}{=^{\log}}

\newcommand{\BB}{\mathbf{BB}}
\newcommand\D{{\mathbf{D}}}
\newcommand\R{\mathbb{R}}
\newcommand\Z{\mathbb{Z}}
\newcommand\C{\mathbb{C}}
\newcommand\M{\mathbf{M}}
\newcommand\h{\mathbb{h}}
\newcommand\N{\mathbb{N}}
\newcommand\BT{\Sigma}
\newcommand\FS{\BT^*}
\newcommand\IS{\BT^\infty}
\newcommand\FIS{\BT^{*\infty}}

\renewcommand\H{{\mathbf H}}

\newcommand\ml{\underline{\mathbf m}}

\newcommand\uhr{\upharpoonright}

\renewcommand\R{{\mathbf R}}
\newcommand\Hq{{\mathbf H}_\mathbf{q}}
\newcommand\Hd{{\mathbf H}_\mathbf{d}}
\newcommand\K{{\mathbf K}} 
\newcommand\QC{{\mathbf {QC}}}
\newcommand\I{{\mathbf I}}
\newcommand\SI{{\mathbf i}}
\newcommand\m{{\mathbf m}}
\renewcommand\d{{\mathbf d}}
\renewcommand\h{{\mathbf h}}
\newcommand\ch{{\chi}}

\author {Samuel Epstein\footnote{\href{mailto:samepst@icloud.com}{samepst@icloud.com}
}}
\title{\vspace*{-3pc} Algorithmic No-Cloning Theorem} \date{\today}\maketitle
\begin{abstract}
	We introduce notions of algorithmic mutual information and deficiency of randomness of quantum states. These definitions enjoy conservation inequalities over unitary transformations and partial traces. We show that a large majority of pure states have minute self algorithmic information. We provide an algorithmic variant to the no-cloning theorem, by showing that only a  small minority of quantum pure states can clone a non negligible amount of algorithmic information. We also provide a chain rule inequality for quantum algorithmic entropy. We show that deficiency of randomness does not increase under POVM measurements.
\end{abstract}
\section{Introduction}
The no-cloning theorem states that every unitary transform cannot clone an arbitrary quantum state. Hiowever some unitary transforms can clone a subset of pure quantum states. For example, given basis states $\ket{1}, \ket{2}, \ket{3}, \dots$ there is a unitary transform that transforms each $\ket{i}\ket{0}$ to $\ket{i}\ket{i}$. In addition, there exists several generalizations to the no-cloning theorem, showing that imperfect clones can be made. In \citep{BuzekHi96}, a universal cloning machine was introduced that can clone an arbitrary state with the fidelity of 5/6.

In this paper, we look at the no-cloning theorem from an algorithmic perspective. We introduce the notion of the algorithmic mutual information, $\I$, between two quantum states. This is a symmetric measure that enjoys conservation inequalities over unitary transforms and partial traces. Quantum algorithmic information upper bounds the amount of classical algorithmic information between POVM measurements of two quantum states.

Given this information function, a natural question to pose is whether a considerable portion of pure states can use a unitary transform to produce two states that share a large amount of information. This paper answers this question in the negative. Only a very sparse set of pure states can, given any unitary transform, duplicate algorithmic information.

This result is achieved in a two step process. In the first step, we show that only a small minority of pure states have non negligible self information. This fact is interesting in it own right, since we show that most pure states have high quantum algorithmic entropy. In the second step, we show that the information between any two states produced from a unitary tranform and the quantum state $\ket{\phi}\ket{0}$ is upper bounded by the self information of $\ket{\phi}$. More specifically,
\begin{enumerate}
	\item Let $\Lambda$ be the uniform distribution on the unit sphere of an $n$ qubit space.\\ $\int 2^{\I(\ket{\phi}:\ket{\phi})}d\Lambda=O(1)$,
	\item If $C\ket{\phi}\ket{0}=\ket{\psi}\ket{\varphi}$ for unitary transform $C$, then $\I({\ket{\psi}}\,{:}\,{\ket{\varphi}})<\I({\ket{\phi}}\,{:}\,{\ket{\phi}})+O(1)$.
\end{enumerate}

The details of the above statements can be found in Theorems \ref{thm:selfinfo} and \ref{thm:nocloning}. These two results combined together imply that on average, states can only duplicate a negligible amount of information. However the basis states, $\ket{i}$, can use a unitary transform to clone at least $\K(i)$ information, where $\K$ is the Kolmogorov complexity measure.

In addition to this algorithmic take on the no-cloning theorem, we provide some other results as well. We define the notion of randomness of one quantum state with respect to another, possibly non computable, quantum state. We show conservation of randomness with respect to unitary transformations and partial traces. We prove a chain rule inequality with respect to quantum algorithmic entropy. We show that POVM measurements do not increase the deficiency of randomness of a quantum state with respect to another quantum state.

\section{Related Work}

The study of Kolmogorov complexity originated from the work of~\citep{Kolmogorov65}. The canonical self-delimiting form of Kolmogorov complexity was introduced in~\citep{ZvonkinLe70} and~\citep{Chaitin75}. The universal probability $\m$ was introduced in~\citep{Solomonoff64}.

More information about the history of the concepts used in this paper can be found in the textbook~\citep{LiVi08}. Quantum algorithmic probability was studied in~\citep{Gacs01}. A type of quantum complexity dependent on descriptive complexity was introduced in\citep{Vitanyi00}. Another variant, quantum Kolmogorov complexity, was developed in\citep{BerthiaumeVaLa01}. Quantum Kolmogorov complexity uses a universal quantum Turing machine. The extension of G\'{a}cs entropy to infinite Hilbert spaces can be seen in \citep{Benatti14}. In \citep{Benatti06}, a quantum version of Brudno's theorem is proven, connecting the Von Neumann entropy rate and two notions of quantum Kolmogorov complexity. In \citep{NiesSc18}, quantum Martin L\"{o}f sequences were introduced.

\section {Conventions and Kolmogorov Complexity Tools.}

Let $\mathbb{N}$, $\Sigma$, $\FS$ be the set of natural numbers, bits and finite sequences. The $i$th bit of a sequence $x\in\FS$ is $x[i]$. $\|x\|=n$ for $x{\in}\Sigma^n$.  $[A]=1$ if statement $A$ holds, else $[A]=0$. ${\lea}f$, ${\gea}f$, ${\eqa}f$, and ${\lel}f$, ${\gel} f$, ${\eql}f$, and ${\lem}f$, ${\gem}f$, and ${\eqm}f$
denote ${<}f{+}O(1)$, ${>}f{-}O(1)$, ${=}f{\pm}O(1)$, and ${<}f{+}O(\log(f
{+}1))$, ${>}f{-}O(\log(f{+}1))$, ${=}f{\pm}O(\log(f{+}1))$, and 
${<}f/O(1)$, ${>}f/O(1)$, ${=}f\,{*}{/}\,O(1)$ respectively. To explicitly specify a constant dependent on parameters $\alpha_1,\alpha_2\dots$, we use the notation $O_{\alpha_1,\alpha_2\dots}(1)$. 

For Turing machine $U$, we say program $p\in\FS$ outputs string $x$, with $U(p)=x$, if $U$ outputs $x$ after reading $\|p\|$ bits of $p$ from the input tape and halts. Otherwise if $U$ reads ${\neq}\|p\|$ bits or it never halts, then $U(p)=\perp$. By this definition $U$ is  a prefix algorithm. Auxiliary inputs $y$ to $U$ are denoted by  $U_y$. Our $U$ is universal,
i.e.\ minimizes (up to $\eqa$) Kolmogorov complexity $\K$. This measure is $\K(x/y)=\min_p\{\|p\|:U_y(p){=}x\}$. The universal probability of an element $x$ relative to string $y$ is $\m(x/y)=
\sum_p2^{{-}\|p\|}[U_y(p) = x]$. We omit empty $y$. By the coding theorem, $-\log\m(x)\eqa\K(x)$. When we say that universal Turing machine is relativized to an elementary object, this means that an encoding of the object is provided to the universal Turing machine on an auxilliary tape.
\newpage
\section{Quantum States}

We deal with finite $n$ dimensional Hilbert spaces $\mathcal{G}_n$, with bases $\ket{\alpha_1},\ket{\alpha_2},\dots,\ket{\alpha_n}$. We assume $\mathcal{G}_{n+1}\supseteq\mathcal{G}_n$ and the bases for $\mathcal{G}_n$ are the beginning of that of $\mathcal{G}_{n+1}$. An $n$ qubit space is denoted by $\mathcal{Q}_n = \bigotimes_{i=1}^n\mathcal{Q}_1$, where qubit space $\mathcal{Q}_1$ has bases $\ket{0}$ and $\ket{1}$. For $x\in\Sigma^n$ we use $\ket{x}\in\mathcal{Q}_n$ to denote $\bigotimes_{i=1}^n\ket{x[i]}$. The space $\mathcal{Q}_n$ has $2^n$ dimensions and we identify it with $\mathcal{G}_{2^n}$.

A pure quantum state $\ket{\phi}$ of length $n$ is represented as a unit vector in $\mathcal{Q}_n$. Its corresponding element in the dual space is denoted by $\bra{\phi}$. The tensor product of two vectors is denoted by $\ket{\phi}\otimes\ket{\psi} = \ket{\phi}\ket{\psi} = \ket{\phi\psi}$. The inner product of $\ket{\psi}$ and $\bra{\phi}$ is denoted by $\braket{\psi|\phi}$.

The transpose of a matrix $M$ is denoted by $M^T$. The tensor product of two matrices is denoted by $A\otimes B$. The trace of a matrix is denoted by $\Tr$ and for tensor product space $\mathcal{G}_X\otimes\mathcal{G}_Y$, the partial trace is denoted by $\Tr_Y$. For positive semidefinite matrices, $\sigma \leq\rho$ iff $\rho-\sigma$ is positive semidefinite. Mixed states are represented by density matrices, which are, self adjoint, positive semidefinite, operators of trace 1. A semi-density matrix has non-negative trace less than or equal to 1. 

A pure quantum state $\ket{\phi}$ and (semi)density matrix $\sigma$ are called \textit{elementary} if their real and imaginary components have rational coefficients. Elementary objects can be encoded into strings or integers and be the output of halting programs. Therefore one can use the terminology $\K(\ket{\phi})$ and $\K(\sigma)$, and also $\m(\ket{\phi})$ and $\m(\sigma)$. Algorithmic quantum entropy, also known as G\'{a}cs entropy, is defined using the following universal semi-density matrix, parametered by $x\in\FS$, with
$$\bmu_{/x} = \sum_{\textrm{elementary }\ket{\phi}\in\mathcal{Q}_n}\m(\ket{\phi}/x,n)\,{\ket{\phi}}{\bra{\phi}}.$$ 
The parameter $n$ represents the number of qubits. We use $\bmu_X$ to denote the matrix $\bmu$ over the Hilbert space denoted by symbol $X$. The G\'{a}cs entropy of a mixed state $\sigma$, conditioned on $x\in\FS$ is defined by $\H(\sigma /x) = \ceil{-\log \Tr\bmu_{/x}\sigma}$. We use the following notation for pure states, with $\H(\ket{\phi}/x) = \H(\,{\ket{\phi}}{\bra{\phi}}\,/x)$. For empty $x$ we omit.  This definition of algorithmic entropy generalizes $\underline{H}$ in \citep{Gacs01} to mixed states.

We say program $q\in\FS$ lower computes positive semidefinite matrix $\sigma$ if, given as input to universal Turing machine $U$, the machine $U$ reads  $\leq\|q\|$ bits and outputs, with or without halting, a sequence of elementary semi-density matrices $\{\sigma_i\}$ such that $\sigma_{i}\leq \sigma_{i+1}$ and $\lim_{i\rightarrow\infty}\sigma_i = \sigma$. A matrix is lower computable if there is a program that lower computes it. The matrix $\bmu$ is universal in that it multiplicatively dominates all lower computable semi-density matrices, as shown in the following theorem, which will be used throughout this paper. \\ \\
\textbf{Theorem} (\citep{Gacs01}, Theorem 2)$ $\\
\label{thm:lowercompute}
\textit{If $q\in\FS$ lower computes semi-density matrix $\sigma$, then $\m(q/n)\sigma \lem \bmu$.}
\newpage

\section{Addition Theorem}
\label{sec:chain}
The addition theorem for classical entropy asserts that the joint entropy
for a pair of random variables is equal to the entropy of one plus the conditional entropy of the other, with $\mathcal{H}(\mathcal{X})+\mathcal{H}(\mathcal{Y}/\mathcal{X})=\mathcal{H}(\mathcal{X},\mathcal{Y})$. For algorithmic entropy, the chain rule is slightly more nuanced, with $\K(x)+\K(y/x,\K(x))\eqa\K(x,y)$. An analogous relationship cannot be true for G\'{a}cs entropy, $\H$, since as shown in Theorem 15 of \citep{Gacs01}, there exists elementary $\ket{\phi}$ where $\H(\ket{\phi}\ket{\phi})-\H(\ket{\phi})$ can be arbitrarily large, and $\H(\ket{\phi}/\ket{\phi})\eqa 0$. However, the following theorem shows that a chain rule inequality does hold for $\H$.

For $n^2\times n^2$ matrix $A$, let $A[i,j]$ be the $n\times n$ submatrix of $A$ starting at position $(n(i-1)+1,n(j-1)+1)$. For example for $n=2$ the matrix $$
A=\begin{bmatrix} 
1 & 2 & 3 & 4\\
5 & 6 & 7 & 8\\
9 & 10 & 11 & 12\\
13 & 14 & 15 &16
\end{bmatrix}
$$
has $A[1,1]=\begin{bmatrix}
1 & 2\\
5 & 6
\end{bmatrix}$, $A[1,2]=\begin{bmatrix}
3 & 4\\
7 & 8
\end{bmatrix}$,  $A[2,1]=\begin{bmatrix}
9 & 10\\
13 & 14
\end{bmatrix}$,  $A[2,2]=\begin{bmatrix}
11 & 12\\
15 & 16
\end{bmatrix}$. 

For $n^2\times n^2$ matrix $A$ and $n\times n$ matrix $B$, let $M_{AB}$ be the $n\times n$ matrix whose $(i,j)$ entry is equal to $\Tr A[i,j]B$. For any $n\times n$ matrix $C$, in can be seen that $\Tr A(C\otimes B)=\Tr M_{AB}C$. Furthermore if $A$ is lower computable and $B$ is elementary, then $M_{AB}$ is lower computable.

For elementary semi density matrices $\rho$, we use $\langle \rho,\H(\rho)\rangle$ to denote the encoding of the pair of an encoded $\rho$ and an encoded natural number $\H(\rho)$.

\begin{thm}[\textbf{Addition Inequality}]
\label{thm:addition}
For semi-density matrices $\sigma$, $\rho$, elementary $\rho$,\\  $\H(\rho)+\H(\sigma / \langle\rho,\H(\rho)\rangle) \lea \H(\sigma\otimes\rho)$.
\end{thm}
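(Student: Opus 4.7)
The plan is to apply the $M_{AB}$ construction from the preceding subsection with $A=\bmu$ on the joint space carrying $\sigma\otimes\rho$ and $B=\rho$, so that $\Tr\bmu(\sigma\otimes\rho)=\Tr M_{\bmu\rho}\,\sigma$. Since $\bmu$ is lower computable and $\rho$ is elementary, $M_{\bmu\rho}$ is a lower computable positive operator on the $\sigma$-space. The strategy is then to rescale $M_{\bmu\rho}$ by $2^{\H(\rho)}$ and apply the universality theorem conditionally on $x=\langle\rho,\H(\rho)\rangle$.

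First I would bound the total mass of $M_{\bmu\rho}$. The $C=I$ case of the displayed identity $\Tr A(C\otimes B)=\Tr M_{AB}C$ gives $\Tr M_{\bmu\rho}=\Tr(\Tr_\sigma\bmu)\,\rho$. The partial trace $\Tr_\sigma\bmu$ is a lower computable semi-density matrix on the $\rho$-space, hence $\lem\bmu$ by universality, so $\Tr M_{\bmu\rho}\lem\Tr\bmu\rho\lem 2^{-\H(\rho)}$. Now set $\tau:=2^{\H(\rho)}\,M_{\bmu\rho}$. Given $x=\langle\rho,\H(\rho)\rangle$, both $\rho$ and the integer scaling factor $2^{\H(\rho)}$ are available, so $\tau$ is lower computable relative to $x$ by an $O(1)$-complexity program, and the previous bound ensures $\Tr\tau\lem 1$. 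Conditional universality then yields $\tau\lem\bmu_{/x}$, i.e.\ $M_{\bmu\rho}\lem 2^{-\H(\rho)}\bmu_{/x}$. Tracing against $\sigma$ gives $\Tr\bmu(\sigma\otimes\rho)=\Tr M_{\bmu\rho}\sigma\lem 2^{-\H(\rho)}\Tr\bmu_{/x}\sigma\lem 2^{-\H(\rho)-\H(\sigma/x)}$. Applying $-\log$ and the ceiling yields the claimed $\H(\sigma\otimes\rho)\gea\H(\rho)+\H(\sigma/x)$.

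The main subtlety is the rescaling step: the multiplication by $2^{\H(\rho)}$ only lands back inside the semi-density regime (where the universality of $\bmu_{/x}$ is applicable) thanks to the separate partial-trace bound $\Tr M_{\bmu\rho}\lem 2^{-\H(\rho)}$. This is also exactly why $\H(\rho)$, and not merely $\rho$, has to be in the conditioning: without access to $\H(\rho)$ the lower-computing program for $\tau$ cannot perform the rescaling in $O(1)$ complexity, and universality would yield only $\H(\sigma\otimes\rho)\gea\H(\sigma/x)$, losing the additive $\H(\rho)$ term. This role of $\H(\rho)$ parallels that of $\K(x)$ in the classical chain rule $\K(x,y)\eqa\K(x)+\K(y/x,\K(x))$.
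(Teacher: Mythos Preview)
Your argument is correct and is essentially the paper's own proof: construct $M_{\bmu\rho}$, bound its trace by $2^{-\H(\rho)}$ via the partial-trace identity (the paper cites G\'acs' $\Tr_Y\bmu_{XY}\eqm\bmu_X$ for this, which amounts to your inline ``partial trace is lower computable, hence $\lem\bmu$''), rescale by $2^{\H(\rho)}$, and invoke conditional universality relative to $\langle\rho,\H(\rho)\rangle$. Your closing remark on why $\H(\rho)$ must appear in the conditioning is a nice addition not made explicit in the paper.
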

\begin{proof}
Let $\bmu_{2n}$ be the universal lower computable semi density matrix over the space of 2n qubits, $\mathcal{Q}_{2n} = \mathcal{Q}_n{\otimes}\mathcal{Q}_n= \mathcal{Q}_A\otimes\mathcal{Q}_B$. Let $\bmu_n$ be the universal matrix of the space over $n$ qubits. We define the following bilinear function over complex matrixes of size $n\times n$, with $T(\nu,\delta) =\Tr\bmu_{2n}(\nu\otimes\delta)$. For fixed $\rho$, $T(\nu,\rho)$ is of the form $T(\nu,\rho) = \Tr M_{\bmu_{2n}\rho}\nu$. The matrix $M_{\bmu_{2n}\rho}$ has trace equal to 
\begin{align*}
\Tr M_{\bmu_{2n}\rho} &= T(\rho,\mathit{I})\\
 &= \Tr\bmu_{2n}(\rho\otimes I)\\
 &= \Tr\left( (\Tr_{\mathcal{Q}_B} \bmu_{2n})\rho\right)\\
 &\eqm \Tr \bmu_n\rho \\
 &\eqm {2^{-\H(\rho)}},
\end{align*}
using Theorem 14 of \citep{Gacs01}, which states $\Tr_Y\bmu_{XY}\eqm \bmu_X$. By the definition of $M$, since $\bmu_{2n}$ and $\rho$ are positive semi-definite, it must be that $M_{\bmu_{2n}\rho}$ is positive semi-definite. Since the trace of $M_{\bmu_{2n}\rho}$ is $\eqm {2^{-\H(\rho)}}$
, it must be that up to a multiplicative constant, $2^{\H(\rho)}M_{\bmu_{2n}\rho}$ is a semi-density matrix. Since $\bmu$ is lower computable and $\rho$ is elementary, by the definition of $M$, $2^{\H(\rho)}M_{\bmu_{2n}\rho}$ is lower computable relative to the string $\langle\rho,\H(\rho)\rangle$. Therefore we have that $2^{\H(\rho)}M_{\bmu_{2n}\rho}\lem \bmu_{/\langle\rho,\H(\rho)\rangle}$. So we have that $-\log \Tr 2^{\H(\rho)}M_{\bmu_{2n}\rho}\sigma = -\H(\rho)-\log T(\sigma,\rho) \eqa \H(\sigma\otimes\rho)-\H(\rho) \gea -\log \bmu_{/(\rho,\H(\rho))}\sigma \eqa \H(\sigma/\langle\rho,\H(\rho)\rangle)$.
\end{proof}

\section{Deficiency of Randomness and Information}
\label{sec:rareinfo}
In this section, we extend algorithmic conservation of randomness and information to the quantum domain. We also present lower and upper bounds for the amount of self algorithmic information that a mixed quantum state can have. 

The classical deficiency of randomness of a semimeasure $\gamma$ with respect to a computable probability measure $P$ is denoted by $\d(\gamma|P) = \log\sum_x \gamma(x)\m(x)/P(x)$. This term enjoys conservation inequalities, where for any computable transform $T:\FS\rightarrow\FS$, $\d(T\gamma|TP) < \d(\gamma|P)+O_{T,P}(1)$. 

For semi-density matrix $\rho$, a matrix $\nu$ is a $\rho$-\textit{test}, $\nu\in\mathcal{T}_\rho$, if it is lower computable and $\Tr\nu\rho\leq 1$. In \citep{Gacs01}, the universal randomness test of with respect to elementary $\rho$ was defined as $\nu_\rho=\sum_i\m(i)\nu_i$, where $\{\nu_i\}$ is an enumeration of $\mathcal{T}_\rho$. Paralleling the classical definition, the deficiency of randomness of $\sigma$ with respect to $\rho$ was defined as $\log\Tr\nu_\rho\sigma$. 

For non computable $\rho$, $\mathcal{T}_\rho$ is not necessarily enumerable, and thus a universal lower computable randomness test does not necessarily exist, and cannot be used to define the $\rho$ deficiency of randomness. So in this case, the deficiency of randomness is instead defined using an aggregation of $\rho$-tests, weighted by their lower algorithmic probabilities. This is reminiscient of the definition of $\mathbf{D}$ in \citep{Levin84}, which is an aggregation of integral tests, weighted by their algorithmic probabilities. The lower algorithmic probability of a lower computable matrix $\sigma$ is $\ml(\sigma/x) = \sum \{\m(q/x)\,{:}\,q\textrm{ lower computes }\sigma\}$. Let $\mathfrak{T}_\rho = \sum_{\nu\in\mathcal{T}_\rho} \ml(\nu/n)\nu$.

\begin{dff} 
 The deficiency of randomness of $\sigma$ with respect to $\rho$ is $\d(\sigma|\rho) = \log \Tr\mathfrak{T}_\rho\sigma$.
\end{dff}
By definition, $\mathfrak{T}_\rho$ is universal, since for every lower computable $\rho$-test $\nu$, $\ml(\nu)\nu<\mathfrak{T}_\rho$. So, relativized to invertible elementary $\rho$, by Theorem 17 of \citep{Gacs01}, $\mathfrak{T}_\rho$ is equal, up to a multiplicative constant to the universal lower computable $\rho$ test, and also $\d(\sigma|\rho)\eqa \log\Tr \rho^{-1/2}\bmu\rho^{-1/2 }\sigma$. This parallels the classical definition of $\d(x|P)=\log\m(x)/P(x)$.

\begin{prp}
\label{prp:lowerprob}
For semi-density matrix $\nu$, relativized to unitary transform $A$, $\ml(A^*\nu A/n)\gem \ml(\nu/n)$.
\end{prp}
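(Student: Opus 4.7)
The plan is to set up an injective correspondence between programs that lower compute $\nu$ and programs that lower compute $A^*\nu A$, with only $O(1)$ overhead once $A$ is on the auxiliary tape.

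First, I would verify that the conjugation $X \mapsto A^* X A$ transports lower computations. If $q \in \FS$ lower computes $\nu$, producing an increasing chain $\nu_1 \leq \nu_2 \leq \cdots$ of elementary semi-density matrices with $\lim_i \nu_i = \nu$, then $A^*\nu_1 A \leq A^*\nu_2 A \leq \cdots$ is again monotone (conjugation preserves the semi-definite order) and converges to $A^*\nu A$. Because $A$ is unitary, $\Tr(A^*\nu_i A) = \Tr(A A^* \nu_i) = \Tr \nu_i \leq 1$, so each entry is still a semi-density matrix, and elementarity is preserved because $A$ is elementary (being available on the auxiliary tape).

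Next, I would construct, relative to $A$, a fixed $O(1)$-size wrapper $W_A$ so that for any input $r$, the program $W_A r$ simulates $U_A$ on $r$ internally, intercepts the stream of elementary outputs $\nu_i$, and emits $A^* \nu_i A$. By the previous step, if $U_A(r) = q$ lower computes $\nu$, then $U_A(W_A r) =: q'$ lower computes $A^*\nu A$. Hence each description $r$ of $q$ yields a description $W_A r$ of $q'$ with $\|W_A r\| = \|r\| + O(1)$, giving $\m(q'/n) \gem \m(q/n)$ with everything relativized to $A$. The map $q \mapsto q'$ so defined is injective (by choosing a prefix-free canonical encoding for $W_A$).

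Summing the inequality over all $q$ that lower compute $\nu$ then yields
\[
\ml(A^*\nu A / n) \;\geq\; \sum_{q \text{ lower computes } \nu} \m(q'/n) \;\gem\; \sum_{q \text{ lower computes } \nu} \m(q/n) \;=\; \ml(\nu/n),
\]
which is the claimed inequality (all quantities relative to $A$). The main thing to check is that the wrapper overhead is truly $O(1)$ uniformly in $\nu$ and $n$: this is exactly what relativization to $A$ buys, since $A$ contributes no description length and conjugation by a hard-coded matrix is a fixed computable postprocessing of the output stream. The only subtlety worth highlighting is the elementarity of $A^*\nu_i A$, which is immediate once $A$ is elementary and is implicit in the hypothesis that $A$ can be supplied on the auxiliary tape.
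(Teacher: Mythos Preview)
Your proposal is correct and follows essentially the same route as the paper: both prepend a fixed $O(1)$-size helper string (the paper's $r$, your $W_A$) to each program $q$ that lower computes $\nu$, observe that the resulting program lower computes $A^*\nu A$, deduce $\m(q_A/n)\gem\m(q/n)$, and then sum over all such $q$ using injectivity of $q\mapsto q_A$. You are somewhat more explicit than the paper in checking that conjugation by a unitary preserves monotonicity, the trace bound, and elementarity of the approximating sequence.
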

\begin{proof}
For every string $q$ that lower computes $\nu$, there is a string $q_A$ of the form $rq$, that lower computes $A^*\nu A$. 
This string $q_A$ uses the helper code $r$, and $\langle A\rangle$ on the auxilliary tape, to take the intermediary outputs $\xi_i$ of $q$ and outputs the intermediary output $A^*\xi_i A$. Since the complexity of $r$ is a constant, $\m(q_A/n)\gem \m(q/n)$.
\begin{align*}
\ml(\nu/n) &=\sum\{\m(q/n):q\textrm{ lower computes } \nu\}\\
&\lem \sum\{\m(q_A/n):q\textrm{ lower computes } \nu\}\\
& \lem\sum\{\m(q'/n):q'\textrm{ lower computes } A^*\nu A\}\\
&\lem \ml(A^*\nu A/n).
\end{align*}
\end{proof}
\newpage
\begin{thm}[Conservation of Randomness, Unitary Transform]
\label{thm:ConsRan}
For semi-density matrices $\sigma$ and $\rho$, relativized to  elementary unitary transform $A$, $\d(A\sigma A^*|A\rho A^*) \eqa \d(\sigma|\rho).$
\end{thm}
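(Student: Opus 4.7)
The plan is to establish a bijection between the set of lower computable $\rho$-tests and the set of lower computable $A\rho A^*$-tests given by the map $\nu \mapsto A\nu A^*$, and then show that this map preserves both the trace pairings and the lower algorithmic probabilities up to a multiplicative constant. This will match $\Tr\mathfrak{T}_{A\rho A^*}(A\sigma A^*)$ with $\Tr\mathfrak{T}_\rho\sigma$ term by term.

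First I would check the bijection on $\mathcal{T}_\rho$. Since $A$ is elementary and unitary, if $\nu$ is lower computable then so is $A\nu A^*$ (feed $\langle A\rangle$ on the auxiliary tape and conjugate each intermediate approximation). The test condition is preserved by cyclicity of the trace together with $A^*A = I$, since $\Tr (A\nu A^*)(A\rho A^*) = \Tr \nu\rho \leq 1$. The inverse map $\mu \mapsto A^*\mu A$ is handled symmetrically, so the map is a bijection $\mathcal{T}_\rho \leftrightarrow \mathcal{T}_{A\rho A^*}$.

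Next I would invoke Proposition \ref{prp:lowerprob} twice under relativization to $A$ (which also provides $A^*$ as an elementary object): once directly to obtain $\ml(A^*\mu A / n) \gem \ml(\mu/n)$, and once with $A^*$ in the role of the unitary to get $\ml(A\nu A^*/n) \gem \ml(\nu/n)$. Setting $\mu = A\nu A^*$ in the first inequality gives the reverse bound $\ml(\nu/n) \gem \ml(A\nu A^*/n)$. Hence $\ml(A\nu A^*/n) \eqm \ml(\nu/n)$ for every lower computable $\nu$.

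Finally I would combine the pieces. Using the bijection to reindex,
\begin{align*}
\Tr\mathfrak{T}_{A\rho A^*}(A\sigma A^*)
&= \sum_{\mu \in \mathcal{T}_{A\rho A^*}} \ml(\mu/n)\Tr \mu(A\sigma A^*) \\
&= \sum_{\nu \in \mathcal{T}_\rho} \ml(A\nu A^*/n)\Tr (A\nu A^*)(A\sigma A^*) \\
&\eqm \sum_{\nu \in \mathcal{T}_\rho} \ml(\nu/n)\Tr \nu\sigma \;=\; \Tr\mathfrak{T}_\rho \sigma,
\end{align*}
where the trace equality for each term uses cyclicity and $A^*A=I$. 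Taking logarithms converts the $\eqm$ into an $\eqa$, yielding the theorem. The only delicate step is the second one: making sure Proposition \ref{prp:lowerprob} can be applied in both directions, which is why the elementariness of $A$ (hence of $A^*$) under the relativization matters; the rest is bookkeeping.
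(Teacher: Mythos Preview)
Your argument is correct and follows essentially the same route as the paper: both use conjugation by $A$ to pass between $\mathcal{T}_\rho$ and $\mathcal{T}_{A\rho A^*}$, invoke Proposition~\ref{prp:lowerprob} to compare the $\ml$ weights, and use cyclicity of the trace for the pairing. The only cosmetic difference is that the paper proves one inequality and then appeals to symmetry with $A^*$, whereas you package both directions at once via the explicit bijection and the two-sided estimate $\ml(A\nu A^*/n)\eqm\ml(\nu/n)$.
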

\begin{proof}
If $\nu\in\mathcal{T}_{A\rho A^*}$, then $A^*\nu A\in\mathcal{T}_\rho$. This is because by assumption $\Tr \nu A\rho A^*\leq 1$. So by the cyclic property of trace $\Tr A^*\nu A\rho\leq 1$. Therefore since $A^*\nu A$ is lower computable, $A^*\nu A\in\mathcal{T}_{\rho}$. From proposition \ref{prp:lowerprob}, $\ml(A^*\nu A/n)\gem \ml(\nu/n)$. So we have the following inequality
\begin{align*} 
\d(A\sigma A^*|A\rho A^*) &= \log \sum_{\nu\in\mathcal{T}_{A\rho A^*}}\ml(\nu/n)\Tr\nu A\sigma A^*\\
&\lea \log\sum_{\nu\in\mathcal{T}_{A\rho A^*}}\ml(A^*\nu A/n)\Tr A^*\nu A\sigma\\
&\lea \d(\sigma|\rho). 
\end{align*}
The other inequality follows from using the above reasoning with $A^*$, $A\sigma A^*$, and $A\rho A^*$.
\end{proof}
Conservation of randomness occurs also over a partial trace, as shown in the following theorem. Deficiency of randomness decreases with respect to the reduced quantum states.
\begin{thm}[Conservation of Randomness, Partial Trace]
For $m{<}n$, for the space of $n$ qubits, $\mathcal{Q}_{n}=\mathcal{Q}_{n-m}\otimes\mathcal{Q}_{m}$, relativized to $m$ and $n$, $\d(\Tr_{\mathcal{Q}_{m}}\sigma|\Tr_{\mathcal{Q}_{m}}\rho)\lea \d(\sigma|\rho)$.
\end{thm}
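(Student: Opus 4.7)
The plan is to mirror the structure of Theorem \ref{thm:ConsRan} (conservation of randomness under unitary transforms), replacing conjugation $A(\cdot)A^*$ by the lift ``tensor with identity.'' The key observation is that a test on the reduced space can be pulled back to a test on the full space at essentially no cost in description length.

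Concretely, first I would set $k = n-m$ and show that for every $\nu \in \mathcal{T}_{\Tr_{\mathcal{Q}_m}\rho}$, the operator $\nu \otimes I_{\mathcal{Q}_m}$ lies in $\mathcal{T}_\rho$. Lower computability is immediate once $m$ and $n$ are on the auxiliary tape: a short helper routine takes each elementary approximant of $\nu$ and tensors it with $I_{\mathcal{Q}_m}$. The test condition holds by the partial-trace identity
\begin{equation*}
\Tr (\nu \otimes I_{\mathcal{Q}_m})\,\rho \;=\; \Tr\nu\,(\Tr_{\mathcal{Q}_m}\rho) \;\leq\; 1.
\end{equation*}

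Second, by the same helper-routine argument used in Proposition \ref{prp:lowerprob}, I would establish the lower-probability inequality $\ml((\nu\otimes I_{\mathcal{Q}_m})/n) \gem \ml(\nu/k)$, relative to $m$ and $n$: any program $q$ lower-computing $\nu$ yields a program $rq$ of length $\|q\| + O(1)$ that lower-computes $\nu \otimes I_{\mathcal{Q}_m}$, so the coding-theorem style sum is preserved up to a multiplicative constant. Note also that distinct $\nu$ give distinct lifts $\nu \otimes I_{\mathcal{Q}_m}$, so no double-counting occurs.

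Third, combining these pieces and again using the partial-trace identity on the $\sigma$ side, I would chain the estimates
\begin{align*}
\d(\Tr_{\mathcal{Q}_m}\sigma \,|\, \Tr_{\mathcal{Q}_m}\rho)
&= \log \sum_{\nu \in \mathcal{T}_{\Tr_{\mathcal{Q}_m}\rho}} \ml(\nu/k)\, \Tr \nu\,(\Tr_{\mathcal{Q}_m}\sigma) \\
&\lea \log \sum_{\nu \in \mathcal{T}_{\Tr_{\mathcal{Q}_m}\rho}} \ml((\nu\otimes I_{\mathcal{Q}_m})/n)\, \Tr (\nu\otimes I_{\mathcal{Q}_m})\,\sigma \\
&\lea \log \sum_{\nu' \in \mathcal{T}_\rho} \ml(\nu'/n)\, \Tr \nu'\,\sigma \;=\; \d(\sigma|\rho),
\end{align*}
where the last inequality just observes that the lifts $\nu\otimes I_{\mathcal{Q}_m}$ form a subfamily of $\mathcal{T}_\rho$. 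The main obstacle I anticipate is the bookkeeping in step two: making sure that the constants absorbed into $\lea$ and $\gem$ really depend only on the universal machine (with $m,n$ on the auxiliary tape), and not on $\nu$ itself, so that the sum-level inequality in step three is uniform over all $\nu$.
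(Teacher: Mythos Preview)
Your proposal is correct and follows essentially the same route as the paper: lift each $\nu\in\mathcal{T}_{\Tr_{\mathcal{Q}_m}\rho}$ to $\nu\otimes I_{\mathcal{Q}_m}\in\mathcal{T}_\rho$ via the partial-trace identity, note $\ml(\nu\otimes I_{\mathcal{Q}_m})\gem\ml(\nu)$, and sum. If anything, you are slightly more explicit than the paper about the dimension parameters in $\ml(\cdot/k)$ versus $\ml(\cdot/n)$ and about uniformity of the constant, which the paper leaves implicit.
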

\begin{proof}
If $\nu\in\mathcal{T}_{\Tr_{\mathcal{Q}_{m}}\rho}$, then $\nu\otimes I_{m}\in\mathcal{T}_\rho$, where $I_{m}$ is the identity operator over $\mathcal{Q}_{m}$. This is because $1\geq\Tr \nu\Tr_{\mathcal{Q}_{m}}\rho=\Tr (\nu\otimes I_{m})\rho$. Since $\nu\otimes I_{m}$ is lower computable, $\nu\otimes I_{m}\in\mathcal{T}_\rho$. Also $\ml(\nu\otimes I_{m})\gem \ml(\nu)$. So
\begin{align*}
\d(\Tr_{\mathcal{Q}_{m}}\sigma|\Tr_{\mathcal{Q}_{m}}\rho) &=\log\sum_{\nu\in \mathcal{T}_{\Tr_{\mathcal{Q}_{m}}\rho}}\ml(\nu)\Tr\nu\Tr_{\mathcal{Q}_{m}}\sigma,\\
&\lea \log\sum_{\nu\in \mathcal{T}_{\Tr_{\mathcal{Q}_{m}}\rho}}\ml(\nu\otimes I_{m})\Tr(\nu\otimes I_{m})\sigma,\\
&\lea \d(\sigma|\rho).
\end{align*}
\end{proof}

\subsection{Information}
For a pair of random variables, $\mathcal{X}$, $\mathcal{Y}$, their mutual information is defined to be $\I(\mathcal{X}:\mathcal{Y})=\mathcal{H}(\mathcal{X})+\mathcal{H}(\mathcal{Y})-\mathcal{H}(\mathcal{X},\mathcal{Y})=\mathcal{H}(\mathcal{X})-\mathcal{H}(\mathcal{X}/\mathcal{Y})=\sum_{x,y}p(x,y)\log p(x,y)/p(x)p(y)$. This represents the amount of correlation between $\mathcal{X}$ and $\mathcal{Y}$. Another intrepretation is that the mutual information between $\mathcal{X}$ and $\mathcal{Y}$ is the reduction in uncertainty of $\mathcal{X}$ after being given access to $\mathcal{Y}$.

Quantum mutual information between two subsystems described by states $\rho_A$ and $\rho_B$ of a composite system described by a joint state $\rho_{AB}$ is $I(A:B)=S(\rho_A)+S(\rho_B)-S(\rho_{AB})$, where $S$ is the Von Neumman entropy. Quantum mutual information measures the correlation between two quantum states.

The algorithmic information between two strings is defined to be $\I(x:y/c)=\K(x/c)+\K(y/c)-\K(x,y/c)$. By definition, it measures the amount of compression two strings achieve when grouped together.

The three definitions above are based off the difference between a joint aggregate and the separate parts. Another approach is to define information between two semi-density matrices as the deficiency of randomness over $\bmu\otimes\bmu$, with the mutual information of $\sigma$ and $\rho$ being $\d(\sigma\otimes\rho|\bmu\otimes\bmu)$.  This is a counter argument for the hypothesis that the states are independently chosen according to the universal semi-density matrix $\bmu$. This parallels the classical algorithmic case, where $\I(x:y)\eqa \d((x,y)|\m\otimes\m)\eqa \K(x)+\K(y)-\K(x,y)$. In fact, using this definition, all the theorems in Section \ref{sec:rareinfo} can be proven. However to achieve the conservation inequalities in Section \ref{sec:nocloning}, a further refinement is needed, with the restriction of the form of the $\bmu\otimes\bmu$ tests. Let $\mathcal{C}_{C\otimes D}$ be the set of all lower computable matrices $A\otimes B$, such that $\Tr(A\otimes B)(C\otimes D)\leq 1$. Let $\mathfrak{C}_{C\otimes D}=\sum_{A\otimes B\in\mathcal{C}_{C\otimes D}}\ml(A\otimes B/n)A\otimes B$.

\begin{dff} The mutual information between two semi-density matrices $\sigma$, $\rho$ is defined to be $\I(\sigma\,{:}\,\rho)=\log\Tr\mathfrak{C}_{\bmu\otimes\bmu}(\sigma\otimes\rho)$.
\end{dff}
Up to an additive constant, information is symmetric. 
\begin{thm}
$\I(\sigma\,{:}\,\rho)\eqa\I(\rho\,{:}\,\sigma)$.
\end{thm}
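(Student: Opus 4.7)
The plan is to exploit the obvious symmetry coming from the swap unitary on $\mathcal{Q}_n\otimes\mathcal{Q}_n$. Let $S$ be the elementary unitary that sends a basis vector $\ket{x}\ket{y}$ to $\ket{y}\ket{x}$, so that $S(A\otimes B)S^{*}=B\otimes A$ for all $A,B$, and $S(\bmu\otimes\bmu)S^{*}=\bmu\otimes\bmu$.

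First I would unwind the definitions and write
\[
\Tr\mathfrak{C}_{\bmu\otimes\bmu}(\sigma\otimes\rho)=\sum_{M\in\mathcal{C}_{\bmu\otimes\bmu}}\ml(M/n)\,\Tr M(\sigma\otimes\rho),
\]
where the sum ranges over matrices $M$ of the form $A\otimes B$ satisfying the trace condition. Next I would verify that the map $M\mapsto SMS^{*}$ is a well-defined involution on $\mathcal{C}_{\bmu\otimes\bmu}$: it sends $A\otimes B$ to $B\otimes A$ (so the tensor-product shape is preserved), and since $\bmu\otimes\bmu$ is itself swap-invariant, the cyclic property of trace gives $\Tr(SMS^{*})(\bmu\otimes\bmu)=\Tr M(\bmu\otimes\bmu)\leq 1$, so the test condition is preserved.

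Then I would use the identity $\Tr M(\sigma\otimes\rho)=\Tr M\,S^{*}(\rho\otimes\sigma)S=\Tr(SMS^{*})(\rho\otimes\sigma)$ to reindex the sum by $M':=SMS^{*}$, giving
\[
\Tr\mathfrak{C}_{\bmu\otimes\bmu}(\sigma\otimes\rho)=\sum_{M'\in\mathcal{C}_{\bmu\otimes\bmu}}\ml(SM'S^{*}/n)\,\Tr M'(\rho\otimes\sigma).
\]
At this point I invoke Proposition \ref{prp:lowerprob} with the elementary unitary $S$ (applied in both directions) to conclude that $\ml(SM'S^{*}/n)\eqm\ml(M'/n)$, which turns the previous display into $\eqm\Tr\mathfrak{C}_{\bmu\otimes\bmu}(\rho\otimes\sigma)$. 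Taking logarithms then yields $\I(\sigma:\rho)\eqa\I(\rho:\sigma)$.

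The only mild obstacle is confirming that $M\mapsto SMS^{*}$ really acts as a bijection on $\mathcal{C}_{\bmu\otimes\bmu}$ rather than just on all tensor-product matrices: one must check that ``matrix of the form $A\otimes B$'' is preserved (it is, since $S(A\otimes B)S^{*}=B\otimes A$, independent of any particular decomposition, as $S\cdot S^{*}$ is a well-defined operation on matrices), and that the trace-one test condition survives the swap (it does, by swap-invariance of $\bmu\otimes\bmu$ together with the cyclic property). Everything else is just bookkeeping already handled by Proposition \ref{prp:lowerprob}.
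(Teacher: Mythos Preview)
Your argument is correct and matches the paper's proof: the paper simply asserts that $A\otimes B\mapsto B\otimes A$ is a bijection on $\mathcal{C}_{\bmu\otimes\bmu}$ with $\ml(A\otimes B/n)\eqm\ml(B\otimes A/n)$, while you supply the explicit justification via the swap unitary $S$ and Proposition~\ref{prp:lowerprob}. One small caveat: Proposition~\ref{prp:lowerprob} is stated only for semi-density matrices, whereas the tests $M\in\mathcal{C}_{\bmu\otimes\bmu}$ need not have trace $\leq 1$; however, the proof of that proposition uses only lower computability, so your invocation is legitimate.
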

\begin{proof}
This follows from the fact that for every $A\otimes B\in\mathcal{C}_{\bmu\otimes\bmu}$, the matrix $B\otimes A\in\mathcal{C}_{\bmu\otimes\bmu}$. Furthermore, since $\ml(A\otimes B/n)\eqm\ml(B\otimes A/n)$, this guarantees that $\Tr\mathfrak{C}_{\bmu\otimes\bmu}(\sigma\otimes\rho)\eqm\Tr\mathfrak{C}_{\bmu\otimes\bmu}(\rho\otimes\sigma)$, thus proving the theorem.
\end{proof}

Classical algorithmic information non-growth laws asserts that the information between two strings cannot be increased by more than a constant depending on the computable transform $f$, with $\I(f(x):y)<\I(x:y)+O_f(1)$. Conservation inequalities have been extended to probabilistic transforms and infinite sequences. The following theorem shows information non-growth in the quantum case; information cannot increase under an elementary unitary transform. The general form of the proof to this theorem is analogous to the proof of Corollary 1 in \citep{Levin84}.

\begin{thm}[Conservation of Information, Unitary transform]
\label{thm:consinfo}
For semi-density matrices $\sigma$ and $\rho$, relativized to elementary unitary transform $A$, $\I(A\sigma A^*\,{:}\,\rho)\eqa \I(\sigma\,{:}\,\rho)$.
\end{thm}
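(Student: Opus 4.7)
The plan is to mimic the proof of Theorem~\ref{thm:ConsRan} (Conservation of Randomness under unitary transforms), exploiting the fact that the class $\mathcal{C}_{\bmu\otimes\bmu}$ of tests consists of \emph{product} operators $A'\otimes B$, so a unitary acting on one factor of $\sigma\otimes\rho$ can be pulled onto the first factor of the test without destroying its product structure.

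The key preliminary fact I would establish first is that, relativized to the elementary unitary $A$, one has $A\bmu A^* \eqm \bmu$. This is immediate from Theorem~2 of \citep{Gacs01} (Theorem~\ref{thm:lowercompute}) since $A\bmu A^*$ is lower computable (with $A$ on the auxiliary tape), and conversely applying $A^*\cdot A$ gives the reverse inequality. Next, I would prove the analogue of Proposition~\ref{prp:lowerprob} for product tests: if $A'\otimes B$ is lower computable, then so is $(A^*A'A)\otimes B$, and $\ml((A^*A'A)\otimes B / n) \gem \ml(A'\otimes B / n)$, by prepending a fixed helper code that conjugates the first tensor factor of every intermediate output by $A^*$ on the left and $A$ on the right.

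Given these, the main calculation is short. Take $A'\otimes B\in\mathcal{C}_{\bmu\otimes\bmu}$. By the cyclic property of the trace on the first factor,
\begin{align*}
\Tr(A'\otimes B)(A\sigma A^*\otimes\rho) \;=\; \Tr((A^*A'A)\otimes B)(\sigma\otimes\rho).
\end{align*}
Moreover $(A^*A'A)\otimes B$ is (up to a multiplicative constant) again in $\mathcal{C}_{\bmu\otimes\bmu}$, because
\begin{align*}
\Tr((A^*A'A)\otimes B)(\bmu\otimes\bmu) \;=\; \Tr(A'\,(A\bmu A^*))\,\Tr(B\bmu) \;\eqm\; \Tr(A'\bmu)\Tr(B\bmu) \;\leq\; 1,
\end{align*}
using the product factorization of the trace and $A\bmu A^*\eqm \bmu$. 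Summing over tests, and using the $\ml$-dominance from the previous step,
\begin{align*}
\Tr\mathfrak{C}_{\bmu\otimes\bmu}(A\sigma A^*\otimes\rho) &= \sum_{A'\otimes B\in\mathcal{C}_{\bmu\otimes\bmu}}\ml(A'\otimes B/n)\,\Tr((A^*A'A)\otimes B)(\sigma\otimes\rho)\\
&\lem \sum_{A'\otimes B\in\mathcal{C}_{\bmu\otimes\bmu}}\ml((A^*A'A)\otimes B/n)\,\Tr((A^*A'A)\otimes B)(\sigma\otimes\rho)\\
&\lem \Tr\mathfrak{C}_{\bmu\otimes\bmu}(\sigma\otimes\rho),
\end{align*}
giving $\I(A\sigma A^*:\rho)\lea \I(\sigma:\rho)$. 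The opposite inequality is obtained by running the identical argument with $A$ replaced by $A^*$ and $\sigma$ replaced by $A\sigma A^*$.

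The step I expect to require the most care is verifying that the transformed test $(A^*A'A)\otimes B$ really lies (up to a multiplicative constant) in $\mathcal{C}_{\bmu\otimes\bmu}$ rather than in some shifted family like $\mathcal{C}_{(A^*\bmu A)\otimes\bmu}$; this is exactly where one must invoke $A\bmu A^*\eqm\bmu$ and absorb the resulting multiplicative constant into the $\eqa$ of the conclusion. The other subtlety is ensuring the helper code for conjugation acts uniformly on the first factor only, so that the product form of every intermediate approximation is preserved---this is what makes the restriction to product tests in the definition of $\mathfrak{C}$ essential, and why the naive definition $\d(\sigma\otimes\rho|\bmu\otimes\bmu)$ would not obviously enjoy the same conservation law.
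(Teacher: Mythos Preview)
Your argument is correct and uses the same two ingredients as the paper---the invariance $A\bmu A^*\eqm\bmu$ (relativized to $A$) and the Proposition~\ref{prp:lowerprob}-style dominance $\ml((A^*A'A)\otimes B)\gem\ml(A'\otimes B)$---to push each test $A'\otimes B$ to $c(A^*A'A)\otimes B\in\mathcal{C}_{\bmu\otimes\bmu}$ in one move. The paper's proof does the same thing in two steps: it first passes from $\mathcal{C}_{\bmu\otimes\bmu}$ to $\mathcal{C}_{A\bmu A^*\otimes\bmu}$ using only $A\bmu A^*\lem\bmu$, and then applies the Theorem~\ref{thm:ConsRan} argument to the unitary $A\otimes I$ to return to $\mathcal{C}_{\bmu\otimes\bmu}$; your version simply composes these. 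One small correction to your closing remarks: the helper code should conjugate each intermediate output $\xi_i$ by $A^*\otimes I$ and $A\otimes I$ (well-defined whether or not $\xi_i$ is a product), and in fact the unrestricted $\d(\sigma\otimes\rho\,|\,\bmu\otimes\bmu)$ \emph{would} also be conserved under a one-sided unitary by exactly the same mechanism---the product restriction on tests is needed not here but for the partial-trace and tensoring results in Section~\ref{sec:nocloning}.
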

\begin{proof}

Given density matrices $A$, $B$, $C$ and $D$, we define $\d'(A\otimes B|C\otimes D)=\log\mathfrak{C}_{C\otimes D}A\otimes B$. Thus $\I(\sigma:\rho)=\d'(\sigma\otimes \rho |\bmu\otimes\bmu)$. The semi-density matrix $A\bmu A^*$ is lower semicomputable, so therefore $A\bmu A^* \lem \bmu$ and also $(A\bmu A^*\otimes\bmu)\lem \bmu\otimes\bmu$. So  if $E\otimes F\in\mathcal{C}_{\bmu\otimes\bmu}$ then there is a positive constant $c$, where $c(E\otimes F) \in \mathcal{C}_{A\bmu A^*\otimes\bmu}$. So we have
\begin{align*}
\d'(A\sigma A^*\otimes\rho |\bmu\otimes\bmu)
&=\log\sum_{E\otimes F\in\mathcal{C}_{\bmu\otimes\bmu}}\ml(E\otimes F/n)\Tr (E\otimes F) (A\sigma A^*\otimes \rho)\\
& \lea \log\sum_{E\otimes F\in\mathcal{C}_{\bmu\otimes\bmu}}\ml(c(E\otimes F)/n)\Tr c(E\otimes F) (A\sigma A^*\otimes \rho)\\
&  \lea \d'(A\sigma A^*\otimes \rho |A\bmu A^*\otimes\bmu).
\end{align*}
Using the reasoning of Theorem \ref{thm:ConsRan} on the unitary transform $A\otimes \mathit{I}$ and $\d'$ we have that $\d'(A\sigma A^*\otimes \rho | A\bmu A^*\otimes \bmu) \lea \d'(\sigma\otimes\rho | \bmu\otimes\bmu)$. Therefore we have that $\I(A\sigma A^*\,{:}\,\rho)=\d'(A\sigma A^*\otimes\rho |\bmu\otimes\bmu)\lea\d'(A\sigma A^*\otimes \rho | A\bmu A^*\otimes \bmu)\lea \d'(\sigma \otimes \rho |\bmu \otimes\bmu)\eqa\I(\sigma\,{:}\,\rho)$. The other inequality follows from using the same reasoning with $A^*$ and $A\sigma A^*$. \end{proof}

\subsection{Self Information}

For classical algorithmic information, $\I(x\,{:}\,x)\eqa\K(x)$, for all $x\in\FS$. As shown in this section, this property differs from the quantum case, where there exists quantum states with high descriptional complexity and negligible self information. In fact this is the case for most pure states. The following theorem states that the information between two elementary states is not more than the combined length of their descriptions.

\begin{thm} For elementary $\rho$ and $\sigma$,
$\I(\rho:\sigma)\lea \K(\rho/n)+\K(\sigma/n)$.
\label{thm:infouppperlower}
\end{thm}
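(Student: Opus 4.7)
The plan is to bound $\Tr \mathfrak{C}_{\bmu\otimes\bmu}(\rho\otimes\sigma)$ by separately controlling each factor $\Tr A\rho$ and $\Tr B\sigma$ in the product-test expansion, using elementariness of $\rho$ and $\sigma$ to replace them by $\bmu$ at a multiplicative cost of $1/\m(\rho/n)$ and $1/\m(\sigma/n)$ respectively.

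First, I would record the key domination step: because $\rho$ is elementary, the constant map $q\mapsto\rho$ is a (trivial) lower computation of $\rho$, so $\m(\rho/n)\rho$ is a lower computable semi-density matrix (its trace is at most $\m(\rho/n)\leq 1$). By the universality theorem of G\'{a}cs quoted in Section~4, $\m(\rho/n)\rho\lem \bmu$, and in particular $\rho\leq (c/\m(\rho/n))\bmu$ for some absolute constant $c$. The analogous statement holds for $\sigma$.

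Next I would combine this with positivity. For any positive semi-definite $A$, the inequality $\rho\leq (c/\m(\rho/n))\bmu$ gives $\Tr A\rho=\Tr A^{1/2}\rho A^{1/2}\leq (c/\m(\rho/n))\Tr A\bmu$, and similarly $\Tr B\sigma\lem (1/\m(\sigma/n))\Tr B\bmu$. Therefore, for every $A\otimes B\in\mathcal{C}_{\bmu\otimes\bmu}$,
\begin{align*}
\Tr(A\otimes B)(\rho\otimes\sigma)
 = \Tr A\rho\cdot\Tr B\sigma
 \lem \frac{\Tr A\bmu\cdot\Tr B\bmu}{\m(\rho/n)\m(\sigma/n)}
 = \frac{\Tr(A\otimes B)(\bmu\otimes\bmu)}{\m(\rho/n)\m(\sigma/n)}.
\end{align*}

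Summing this bound against the weights $\ml(A\otimes B/n)$ that define $\mathfrak{C}_{\bmu\otimes\bmu}$, I obtain
\begin{align*}
\Tr\mathfrak{C}_{\bmu\otimes\bmu}(\rho\otimes\sigma)
 \lem \frac{1}{\m(\rho/n)\m(\sigma/n)}\,\Tr\mathfrak{C}_{\bmu\otimes\bmu}(\bmu\otimes\bmu).
\end{align*}
The trailing trace is $O(1)$: by the product-test constraint, $\Tr(A\otimes B)(\bmu\otimes\bmu)\leq 1$ for every term, and $\sum_{A\otimes B}\ml(A\otimes B/n)\leq\sum_q\m(q/n)\leq 1$, since each program $q$ lower computes at most one matrix. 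Taking logs and invoking the coding theorem $-\log\m(x/n)\eqa\K(x/n)$ yields $\I(\rho:\sigma)\lea \K(\rho/n)+\K(\sigma/n)$.

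The only delicate point is the matrix inequality $\Tr A\rho\leq (c/\m(\rho/n))\Tr A\bmu$, which requires $\rho\leq(c/\m(\rho/n))\bmu$ rather than merely the statement about $\m(\rho/n)\rho\lem\bmu$ in the trace; once one unpacks $\lem$ for positive semi-definite matrices (domination in the semi-definite order, not just in trace), this step is routine. Everything else is bookkeeping with the definition of $\mathfrak{C}_{\bmu\otimes\bmu}$ and the obvious $\sum_q\m(q/n)\leq 1$ bound.
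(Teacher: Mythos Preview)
Your proof is correct and follows essentially the same approach as the paper's: both arguments rest on the domination $\m(\rho/n)\rho\lem\bmu$ (and likewise for $\sigma$), pass to the tensor product to compare $\rho\otimes\sigma$ with $\bmu\otimes\bmu$, and then use $\Tr\mathfrak{C}_{\bmu\otimes\bmu}(\bmu\otimes\bmu)\leq 1$. The only cosmetic difference is that the paper phrases the argument by contradiction and invokes the Kronecker-product order inequality $\m(\rho/n)\m(\sigma/n)(\rho\otimes\sigma)\lem\bmu\otimes\bmu$ in one step, whereas you proceed directly and split the trace into the two factors $\Tr A\rho$ and $\Tr B\sigma$ before bounding each; the underlying content is identical.
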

\begin{proof}

Assume not. Then for any positive constant $c$, there exists semi-density matrices $\rho$ and $\sigma$, such that $c\m(\rho/n)\m(\sigma/n)2^{\I(\rho:\sigma)}= c\Tr\m(\rho/n)\m(\sigma/n)\mathfrak{C}_{\bmu\otimes\bmu}(\rho\otimes\sigma)>1$. By the definition of $\bmu$, $\m(\rho/n)\rho\lem \bmu$ and $\m(\sigma/n)\sigma\lem\bmu$. Therefore by the definition of the Kronecker product, there is some positive constant $d$ such that for all $\rho$ and $\sigma$, $d\m(\rho/n)\m(\sigma/n)(\rho\otimes\sigma)< (\bmu \otimes \bmu)$, and similarly $d\Tr \m(\rho/n)\m(\sigma/n)\mathfrak{C}_{\bmu\otimes\bmu}(\rho\otimes \sigma)<\Tr \mathfrak{C}_{\bmu\otimes\bmu}(\bmu\otimes\bmu)$. By the definition of $\mathfrak{C}$, it must be that
$\Tr\mathfrak{C}_{\bmu\otimes\bmu}\bmu\otimes\bmu\leq 1$. However for $c=d$, there exists a $\rho$ and a $\sigma$, such that $\Tr\mathfrak{C}_{\bmu\otimes\bmu}\bmu\otimes\bmu > d\Tr\m(\rho/n)\m(\sigma/n)\mathfrak{C}_{\bmu\otimes\bmu}(\rho\otimes\sigma) > 1$, causing a contradiction.

\end{proof}

\begin{thm}
\label{thm:selfinfo}
Let $\Lambda$ be the uniform distribution on the unit sphere of $\mathcal{H}_{2^n}$.
\begin{enumerate}
\item $\H(2^{-n}I)\eqa n$,
\item $\I(2^{-n}I\,{:}\,2^{-n}I)\lea 0$,
\item $\displaystyle\int 2^{-\H(\ket{\psi})}d\Lambda \eqm 2^{-n}$,
\item $\displaystyle\int 2^{\I(\ket{\psi}\,{:}\,\ket{\psi})}d\Lambda\lea 0$.
\end{enumerate} 
\end{thm}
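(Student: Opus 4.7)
The plan is to use the standard Haar moment identities $\int \ket{\psi}\bra{\psi}\,d\Lambda = 2^{-n}I$ and $\int (\ket{\psi}\bra{\psi})^{\otimes 2}\,d\Lambda = (I+S)/(d(d+1))$ on $\mathcal{H}_{2^n}$, with $d=2^n$ and $S$ the swap, together with two structural facts about $\bmu$: it multiplicatively dominates every lower computable semi-density matrix, and the aggregated test $\mathfrak{C}_{\bmu\otimes\bmu}$ has trace mass $\leq 1$ against $\bmu\otimes\bmu$. Items 1--3 are short; item 4 is the real step.

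For item 1, $\Tr\bmu(2^{-n}I) = 2^{-n}\Tr\bmu$. The bound $\Tr\bmu\le 1$ gives $\H(2^{-n}I)\gea n$, and conversely $2^{-n}I$ is itself elementary and lower computable with a constant-size program, so by universality $2^{-n}I\lem\bmu$ and $\Tr\bmu\gem 1$; in particular $\Tr\bmu\eqm 1$ and $\H(2^{-n}I)\eqa n$. Item 2 is immediate from Theorem~\ref{thm:infouppperlower} applied to $\rho=\sigma=2^{-n}I$, since $\K(2^{-n}I/n)=O(1)$. For item 3, exchanging the integral and trace,
\begin{align*}
\int 2^{-\H(\ket{\psi})}\,d\Lambda\;\eqm\;\Tr\bmu\!\int\!\ket{\psi}\bra{\psi}\,d\Lambda\;=\;2^{-n}\Tr\bmu\;\eqm\;2^{-n},
\end{align*}
using $\Tr\bmu\eqm 1$ from item 1.

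Item 4 is the main obstacle. Using the second-moment identity,
\begin{align*}
\int 2^{\I(\ket{\psi}:\ket{\psi})}\,d\Lambda\;=\;\Tr\mathfrak{C}_{\bmu\otimes\bmu}\!\int\!(\ket{\psi}\bra{\psi})^{\otimes 2}\,d\Lambda\;=\;\frac{\Tr\mathfrak{C}_{\bmu\otimes\bmu}(I+S)}{d(d+1)}.
\end{align*}
By definition $\Tr\mathfrak{C}_{\bmu\otimes\bmu}(\bmu\otimes\bmu)\le 1$, and item 1 gives $\bmu\otimes\bmu\gem d^{-2}I_{d^2}$; together these yield $\Tr\mathfrak{C}_{\bmu\otimes\bmu}\lem d^2$. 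Since $\mathfrak{C}_{\bmu\otimes\bmu}$ is positive semidefinite and $\|S\|=1$, also $|\Tr\mathfrak{C}_{\bmu\otimes\bmu}S|\le\Tr\mathfrak{C}_{\bmu\otimes\bmu}\lem d^2$, so the numerator is $O(d^2)$ and the whole ratio is $O(1)$, as required.

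The hardest point is the bound $\Tr\mathfrak{C}_{\bmu\otimes\bmu}\lem d^2$, which critically combines the universal dominance of $\bmu$ over the maximally mixed state (from item 1) with the trace-$\le 1$ property of the aggregated test $\mathfrak{C}_{\bmu\otimes\bmu}$ against $\bmu\otimes\bmu$. Controlling the off-diagonal swap term via $\|S\|=1$ and positivity of $\mathfrak{C}_{\bmu\otimes\bmu}$ is routine but must not be overlooked, and the cancellation of the two $d^2$ factors against $d(d+1)$ in the denominator is what makes the bound uniform in $n$.
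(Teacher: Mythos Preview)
Your proof is correct and follows essentially the same route as the paper. The only cosmetic difference is in item~4: the paper writes the second moment as $\binom{2^n+1}{2}^{-1}P$ with $P$ the symmetric-subspace projector and uses $P\le I$ directly to bound $\binom{2^n+1}{2}^{-1}P\lem 2^{-2n}I$, then identifies $\Tr\mathfrak{C}_{\bmu\otimes\bmu}(2^{-2n}I)=2^{\I(2^{-n}I:2^{-n}I)}$ and invokes item~2; you instead split $(I+S)/(d(d+1))$, bound the $I$ and $S$ contributions separately via $\Tr\mathfrak{C}_{\bmu\otimes\bmu}\lem d^2$, and arrive at the same $O(1)$ bound --- the two arguments are the same up to the identity $2P=I+S$.
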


\begin{proof}
(1) follows from $\H(2^{-n}I) \eqa -\log\Tr \bmu 2^{-n}I \eqa n-\log\Tr\bmu\eqa n$. (2) is due to Theorem \ref{thm:infouppperlower}, with $\I(2^{-n}I,2^{-n}I)\lea 2\K(2^{-n}I/n) \lea 0$. (3) and (4) use \citep{Gacs01} Section 5 and \citep{BerthiaumeVaLa01} Section 6.3, with $\int \ket{\psi}\bra{\psi}d\Lambda = 2^{-n}I$ and $\int \,{\ket{\psi}}{\bra{\psi}}\otimes {\ket{\psi}\bra{\psi}}\,d\Lambda = \int \,{\ket{\psi\psi}}{\bra{\psi\psi}}\,d\Lambda$ $= {2^n+1\choose 2}^{-1}P$, where $P$ is the projection onto the space of pure states ${\ket{\psi\psi}}$. So $\int 2^{-\H(\ket{\psi})}d\Lambda \eqm \int \Tr\bmu \ket{\psi}\bra{\psi}d\Lambda \eqm \Tr\bmu\int \ket{\psi}\bra{\psi}d\Lambda\eqm 2^{-n}$, and 
\begin{align*}
\int 2^{\I(\ket{\psi}\,{:}\,\ket{\psi})}d\Lambda &= \int \Tr\mathfrak{C}_{\bmu\otimes\bmu}\,{\ket{\psi}}{\bra{\psi}}\otimes {\ket{\psi}\bra{\psi}}\,d\Lambda\\
&=\Tr\mathfrak{C}_{\bmu\otimes\bmu} \int{\ket{\psi}}{\bra{\psi}}\otimes {\ket{\psi}\bra{\psi}}\,d\Lambda\\ 
&= \Tr\mathfrak{C}_{\bmu\otimes\bmu}{2^n+1\choose 2}^{-1}P\\
&\lem \Tr\mathfrak{C}_{\bmu\otimes\bmu}2^{-2n}I\\
&\eqm 2^{\I(2^{-n}I:2^{-n}I)}\\
&\lea 0.
\end{align*}
\end{proof}

\subsection{Measurements}
A POVM $E$ is a finite or infinite set of positive definite matrices $\{E_k\}$ such that $\sum_k E_k = I$. For a given semi-density matrix $\sigma$, a POVM $E$ induces a semi measure over integers, where $E\sigma(k) = \Tr E_k\sigma$. This can be seen as the probability of seeing measurement $k$ given quantum state $\sigma$ and measurement $E$. An elementary POVM $E$ has a program $q$ such that $U(q)$ outputs an enumeration of $\{E_k\}$, where each $E_k$ is elementary. Theorem \ref{thm:povm} shows that measurements can increase only up to a constant factor, the deficiency of randomness of a quantum state. Note that the $\d(E\sigma|E\rho)$ term represents the classical deficiency of randomness of a semimeasure with respect to a computable probability measure, as defined in the beginning of Section \ref{sec:rareinfo}.
 
\begin{thm}
\label{thm:povm}
For semi-density matrices $\sigma$, $\rho$, relativized to elementary $\rho$ and POVM $E$,\\  $\d(E\sigma|E\rho)\lea \d(\sigma |\rho)$.
\end{thm}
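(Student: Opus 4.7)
The plan is to pull back a universal classical $E\rho$-test through the POVM $E$ to obtain a quantum $\rho$-test, and then invoke the universality of $\mathfrak{T}_\rho$. The key identity behind everything is $\sum_k f(k)\Tr E_k\xi = \Tr\bigl(\sum_k f(k) E_k\bigr)\xi$, valid for any nonnegative $f$ and semi-density matrix $\xi$, which lets one translate classical tests on the outcome space into quantum operators on the state space.

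Since $E$ and $\rho$ are elementary and relativized onto the auxiliary tape, $E\rho$ is a computable semimeasure on $\N$. I would take $\tau(k)\eqm \m(k/n)/E\rho(k)$ as the universal lower semicomputable $E\rho$-test, so that $\d(E\sigma|E\rho)\eqa \log\sum_k \tau(k)\, E\sigma(k)$ by the classical definition stated at the start of Section \ref{sec:rareinfo}. I then build the natural quantum counterpart
\[
\nu \;:=\; \sum_k \tau(k)\, E_k,
\]
read as the monotone limit of the elementary partial sums $\nu_N=\sum_{k\le N}\tau(k) E_k$. The computation $\Tr \nu\rho = \sum_k \tau(k)\Tr E_k\rho = \sum_k \m(k/n)\le 1$ verifies the $\rho$-test bound, and lower computability of $\nu$ follows by interleaving the outer index $N$ with lower approximations to each $\tau(k)$ (coming from lower approximations to $\m(k/n)$ and the computable value $E\rho(k)$). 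Because $E$ and $\rho$ are on the auxiliary tape, the witnessing program has constant length, so $\ml(\nu/n)\gem 1$ and $\nu\in\mathcal{T}_\rho$.

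The definition $\mathfrak{T}_\rho = \sum_{\nu'\in\mathcal{T}_\rho}\ml(\nu'/n)\nu'$ then immediately gives $\ml(\nu/n)\nu\le \mathfrak{T}_\rho$. Tracing against $\sigma$ yields
\[
\Tr\mathfrak{T}_\rho\sigma \;\gem\; \Tr\nu\sigma \;=\; \sum_k \tau(k)\, E\sigma(k),
\]
and taking logarithms delivers the desired $\d(\sigma|\rho)\gea \d(E\sigma|E\rho)$.

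The main delicacy is the lower-computability step when $E$ has infinitely many outcomes or when some $\tau(k)$ are large (including the degenerate case $E\rho(k)=0$, handled by truncating $\tau$ at height $M$ and sending $M\to\infty$), since $\nu$ itself need not have bounded trace. I would resolve this by appealing to the same permissive reading of lower computation already used in the preceding partial-trace conservation theorem, where $\nu\otimes I_m$ is treated as a valid $\rho$-test with no a priori trace restriction, together with the positivity of each $E_k$ to ensure monotonicity of the partial sums.
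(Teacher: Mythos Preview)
Your proposal is correct and follows essentially the same route as the paper: form the operator $\nu=\sum_k\bigl(\m(k)/E\rho(k)\bigr)E_k$, verify $\nu\in\mathcal{T}_\rho$ via $\Tr\nu\rho\le 1$ and lower computability, and then invoke $\ml(\nu/n)\nu\le\mathfrak{T}_\rho$ to compare traces against $\sigma$. Your extra care with the infinite-outcome and $E\rho(k)=0$ cases simply makes explicit what the paper's proof leaves implicit.
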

\begin{proof}
$2^{\d(E\sigma|E\rho)}=\sum_k (\Tr  E_k \sigma)\m(k)/(\Tr E_k \rho)=\Tr (\sum_k (\m(k)/\Tr E_k\rho)E_k)\sigma=\Tr\nu\sigma$, where the matrix $\nu = (\sum_k (\m(k)/\Tr E_k\rho)E_k)$ has $\nu\in\mathcal{T}_\rho$, since $\nu$ is lower computable and $\Tr\nu\rho\leq 1$. So $2^{\d(\sigma|\rho)}\geq\ml(\nu/n)\Tr\nu\sigma = \ml(\nu/n)2^{\d(E|\sigma|E\rho)}$. Since $\ml(\nu/n)\gem 1$, $\d(E\sigma|E\rho)\lea \d(\sigma |\rho)$.
\end{proof}

The information between two quantum states is lower bounded by the classical information of two measurements of those states, as shown in the following theorem. 
The following theorem also implies that pure states that are close to simple rotations of complex basis, i.e. unentangled, states will have high self information. However such states are sparse. On average, a quantum state will have negligible self information.
\begin{thm}
\label{thm:twomeasurements}
For semi density matrices $\rho$, $\sigma$, and $i,j\in \mathbb{N}$, relativized to elementary POVM $E$, \\ $\I(i:j/n)+\log E\sigma(i)E\rho(j)-\K(\I(i:j/n)/n)\lea \I(\sigma:\rho)$.
\end{thm}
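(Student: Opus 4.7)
I would exhibit a lower-computable tensor product in $\mathcal{C}_{\bmu\otimes\bmu}$ whose contribution to $\mathfrak{C}_{\bmu\otimes\bmu}$ witnesses the claimed lower bound on $\I(\sigma:\rho)=\log\Tr\mathfrak{C}_{\bmu\otimes\bmu}(\sigma\otimes\rho)$. The key preliminary estimate is $\Tr E_k\bmu\lem\m(k/n)$ for each POVM outcome $k$: the sequence $k\mapsto\Tr E_k\bmu$ is a lower-computable semi-measure on $\mathbb{N}$ with total mass $\Tr\bmu\leq 1$ (since $E$ is elementary and $\bmu$ is lower-computable), so universality of $\m$ relative to $(n,E)$ gives the multiplicative domination. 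This parallels the semi-measure $\nu$ used inside the proof of Theorem~\ref{thm:povm}.

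For each triple $(i,j,k)$, I would form the lower-computable tensor $T_{i,j,k}=c\cdot 2^k\,E_i\otimes E_j$ for a small universal constant $c$. Using the POVM estimate and the identity $2^{\I(i:j/n)}\m(i/n)\m(j/n)\eqm\m(i,j/n)$, the trace $\Tr T_{i,j,k}(\bmu\otimes\bmu)\lem c\cdot 2^k\m(i/n)\m(j/n)$ is bounded by $c\cdot\m(i,j/n)\leq c$ whenever $k\leq\I(i:j/n)$, so $T_{i,j,k}\in\mathcal{C}_{\bmu\otimes\bmu}$ in that range. The tensor is specified by the triple $(i,j,k)$ together with $n$ and $E$ on the auxiliary tape, so $\ml(T_{i,j,k}/n)\gem\m((i,j,k)/n)$, and its trace against $\sigma\otimes\rho$ equals $c\cdot 2^k E\sigma(i)E\rho(j)$.

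Setting $k=\I(i:j/n)$, this single tensor contributes at least $\m((i,j,\I(i:j/n))/n)\cdot c\cdot 2^{\I(i:j/n)} E\sigma(i)E\rho(j)$ to $\Tr\mathfrak{C}_{\bmu\otimes\bmu}(\sigma\otimes\rho)$. Applying the chain rule $\K((i,j,\I(i:j/n))/n)\lea \K(\I(i:j/n)/n)+\K((i,j)/\I(i:j/n),\K(\I(i:j/n)/n),n)$ together with the identity $\m(i/n)\m(j/n)\cdot 2^{\I(i:j/n)}\eqm\m(i,j/n)$, and then taking logarithms, yields the bound $\I(\sigma:\rho)\gea\I(i:j/n)+\log E\sigma(i)E\rho(j)-\K(\I(i:j/n)/n)$.

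The hardest part will be the final bookkeeping: the single-tensor weight $\m((i,j,\I(i:j/n))/n)$ must be split so that the $\K(\I(i:j/n)/n)$ correction emerges cleanly, without an extraneous $\K(i,j/n)$ penalty. Verifying that the conditional $\K((i,j)/\I(i:j/n),\K(\I(i:j/n)/n),n)$ term and the identity $\m(i/n)\m(j/n)\cdot 2^{\I(i:j/n)}\eqm\m(i,j/n)$ telescope into the correct inequality, so that the POVM factors $E\sigma(i)E\rho(j)$ replace $\m(i/n)\m(j/n)$ in precisely the right multiplicative way, is the technical crux.
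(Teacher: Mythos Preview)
Your plan follows the paper's route---build a single tensor test $E_i\otimes E_j$ scaled to sit in $\mathcal{C}_{\bmu\otimes\bmu}$, then read off the bound from its contribution to $\mathfrak{C}_{\bmu\otimes\bmu}$. The preliminary estimate $\Tr E_k\bmu\lem\m(k/n)$ is exactly what the paper proves first. But your choice of scaling factor is too small, and the ``final bookkeeping'' you flag as the hard part cannot in fact be closed with that choice.

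Concretely: with $T_{i,j,k}=c\,2^k E_i\otimes E_j$ and $k=\I(i{:}j/n)$, the contribution you obtain is
\[
\I(\sigma:\rho)\ \gea\ \I(i{:}j/n)+\log E\sigma(i)E\rho(j)-\K\bigl((i,j,\I(i{:}j/n))/n\bigr).
\]
The last term is at least $\K(i,j/n)-O(1)$, and no chain-rule decomposition removes that: the identity $2^{\I(i:j/n)}\m(i/n)\m(j/n)\eqm\m(i,j/n)$ is precisely the definition of $\I$ and gives you nothing new to cancel against $\K(i,j/n)$. For random $i,j$ with $\I(i{:}j/n)\approx 0$ and $\K(i,j/n)\approx 2n$, your bound degenerates to $\I(\sigma:\rho)\gea -2n+\log E\sigma(i)E\rho(j)$, whereas the target is $\I(\sigma:\rho)\gea \log E\sigma(i)E\rho(j)-O(1)$.

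The fix, and what the paper does, is to scale by the \emph{maximal} factor the trace estimate allows: take
\[
\nu\ =\ 2^{\K(i/n)+\K(j/n)-O(1)}\,E_i\otimes E_j,
\]
which still satisfies $\Tr\nu(\bmu\otimes\bmu)\leq 1$ by your own estimate $\Tr E_k\bmu\lem\m(k/n)$. Now the gain in the trace is $\K(i/n)+\K(j/n)$ while the description cost is $\K\bigl((\K(i/n)+\K(j/n)),i,j/n\bigr)$. Rewriting $\K(i/n)+\K(j/n)$ as $\I(i{:}j/n)+\K(i,j/n)$ and using $\K(\K(i,j/n),i,j/n)\eqa\K(i,j/n)$, the $\K(i,j/n)$ terms now \emph{do} cancel, leaving exactly $\I(i{:}j/n)-\K(\I(i{:}j/n)/n)$. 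The moral: you must push the scaling all the way up to $2^{\K(i/n)+\K(j/n)}$; stopping at $2^{\I(i:j/n)}$ throws away the $\K(i,j/n)$ that you later need on the other side of the ledger.
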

\begin{proof}
Since  $z(k)=\Tr\bmu E_k$ is lower semi-computable and $\sum_kz(k)< 1$, $\m(k/n) \gem \Tr\bmu E_k$, and so $1 > 2^{\K(k/n)-O(1)}\Tr\bmu E_k$. Let $\nu=2^{\K(i/n)+\K(j/n)-O(1)}(E_i\otimes E_j)$. $\nu\in\mathcal{T}_{\bmu\otimes\bmu}$, because it is lower semicomputable and $\Tr(\bmu\otimes\bmu)\nu= \Tr2^{\K(i/n)-O(1)}\bmu E_i\otimes2^{\K(j/n)-O(1)}\bmu E_j<1$. Therefore
\begin{align*}
\I(\sigma:\rho) &=\log\Tr\mathfrak{C}_{\bmu\otimes\bmu}(\sigma\otimes\rho)> \log \Tr\ml(\nu/n)\nu(\sigma\otimes\rho)\\
&\gea \K(i/n)+\K(j/n)+\log\Tr E_i\sigma\otimes E_j\rho-\K((\K(i/n)+\K(j/n)),i,j,E/n)\\
&\gea \K(i/n)+\K(j/n)+\log E\sigma(i)E\rho(j)\\
&\;\;\;\;-\K((\K(i/n)+\K(j/n)-\K(i,j/n)),\K(i,j/n),i,j/n)\\
&\gea \K(i/n)+\K(j/n)+\log E\sigma(i)E\rho(j)-\K(\K(i,j/n),i,j/n)-\K(\I(i:j/n)/n)\\
&\gea \I(i:j/n)+\log E\sigma(i)E\rho(j)-\K(\I(i:j/n)/n).
\end{align*}
\end{proof}
\begin{cor}
For semi density matrices $\rho$, $\sigma$, relativized to elementary POVM $E$, \\ $\log\sum_{i,j}2^{\I(i:j/n)-\K(\I(i:j/n)/n)}E\sigma(i)E\rho(j)\lea \I(\sigma:\rho)$.
\end{cor}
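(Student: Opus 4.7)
The plan is to aggregate the family of tensor-product tests underlying the proof of Theorem~\ref{thm:twomeasurements} into a single lower bound on $\Tr\mathfrak{C}_{\bmu\otimes\bmu}(\sigma\otimes\rho)$ that captures the entire sum over $(i,j)$, rather than applying that theorem one pair at a time.

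First I would observe that, for each pair $(i,j)$, the scaled tensor product $\nu_{i,j}=2^{\K(i/n)+\K(j/n)-O(1)}(E_i\otimes E_j)$ lies in $\mathcal{C}_{\bmu\otimes\bmu}$, because $\Tr\bmu E_k\lem\m(k/n)\eqm 2^{-\K(k/n)}$ makes $\Tr\nu_{i,j}(\bmu\otimes\bmu)\lem 1$. Since $\mathfrak{C}_{\bmu\otimes\bmu}$ is defined as a universal mixture over its members, $\mathfrak{C}_{\bmu\otimes\bmu}\gem\sum_{i,j}\ml(\nu_{i,j}/n)\,\nu_{i,j}$. To estimate $\ml(\nu_{i,j}/n)$ I note that, relativized to $E$ and conditioned on $n$, a lower-computation of $\nu_{i,j}$ needs $i,j$ together with the integer $\K(i/n)+\K(j/n)=\I(i:j/n)+\K(i,j/n)$; the $\K(i,j/n)$ portion is absorbed into $\m(i,j/n)$ via the chain rule for $\m$, leaving only the $\I(i:j/n)$ portion to encode, and hence $\ml(\nu_{i,j}/n)\gem\m(i,j/n)\cdot 2^{-\K(\I(i:j/n)/n)}$. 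Tracing against $\sigma\otimes\rho$ and using $\m(i,j/n)\cdot 2^{\K(i/n)+\K(j/n)}\eqm 2^{\I(i:j/n)}$ produces
$$2^{\I(\sigma:\rho)}\gem\sum_{i,j}2^{\I(i:j/n)-\K(\I(i:j/n)/n)}\,E\sigma(i)E\rho(j).$$

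Second I would remove the $\K(\I(i:j/n)/n)$ penalty into the $\lel$ logarithmic slack. The bound $\K(m/n)\lea\log(m+1)$ shows the penalty is polylogarithmic in $\I(i:j/n)$, and the pointwise bound of Theorem~\ref{thm:twomeasurements} together with the normalization $\sum_{i,j}E\sigma(i)E\rho(j)\leq 1$ shows that pairs contributing nontrivially to the sum satisfy $\I(i:j/n)\lea\I(\sigma:\rho)+O(\log(\I(\sigma:\rho)+1))$. Consequently the penalty factor is uniformly $O(\log(\I(\sigma:\rho)+1))$ across the surviving terms, can be pulled out of the sum, and the resulting logarithm of the right-hand side is $\lel\I(\sigma:\rho)$.

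The hard part will be the tail estimate in the second step. Theorem~\ref{thm:twomeasurements} on its own bounds each summand only by $2^{\I(\sigma:\rho)+\K(\I(i:j/n)/n)}$, and naively summing $2^{\K(m/n)}$ over $m$ diverges. The critical input is that the aggregate inequality from the first step, rather than the pointwise one, forces $E\sigma(i)E\rho(j)$ to decay like $2^{\I(\sigma:\rho)-\I(i:j/n)}$ up to polylogarithmic corrections; this decay precisely absorbs the $2^{\K(\I(i:j/n)/n)}$ growth and confines the full sum to $2^{\I(\sigma:\rho)}\cdot(\I(\sigma:\rho)+1)^{O(1)}$, which is what $\lel$ expresses.
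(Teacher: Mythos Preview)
Your first step is exactly the paper's argument: aggregate the tests $A_{i,j}\eqm 2^{\K(i/n)+\K(j/n)}(E_i\otimes E_j)$ inside $\mathfrak{C}_{\bmu\otimes\bmu}$ and then run the chain of inequalities from Theorem~\ref{thm:twomeasurements} to obtain
\[
2^{\I(\sigma:\rho)}\gem\sum_{i,j}2^{\I(i:j/n)-\K(\I(i:j/n)/n)}E\sigma(i)E\rho(j).
\]
That part is correct and matches the paper's one-line sketch verbatim.

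The gap is in your second step. Write $P_m=\sum_{\I(i:j/n)=m}E\sigma(i)E\rho(j)$. What the aggregate inequality (or equivalently the pointwise one from Theorem~\ref{thm:twomeasurements}; they give the same per-level bound) yields is $2^{m-\K(m/n)}P_m\lem 2^{\I(\sigma:\rho)}$, hence $2^mP_m\lem 2^{\I(\sigma:\rho)}\cdot 2^{\K(m/n)}\lem 2^{\I(\sigma:\rho)}\cdot m^{O(1)}$. The polylogarithmic correction is in $m$, not in $\I(\sigma:\rho)$, so the decay of $P_m$ does \emph{not} ``precisely absorb'' the $2^{\K(m/n)}$ growth as you claim: after multiplying back by $2^m$ you are left with a factor $m^{O(1)}$ per level, and summing that over the (a priori unbounded) range of $m$ diverges. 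Your sentence ``pairs contributing nontrivially satisfy $\I(i:j/n)\lea\I(\sigma:\rho)+O(\log(\I(\sigma:\rho)+1))$'' is also not justified: a pair with $E\sigma(i)E\rho(j)=2^{-k}$ and $\I(i:j/n)\approx k$ contributes $\Theta(1)$ to the target sum for arbitrarily large $k$, and nothing in Theorem~\ref{thm:twomeasurements} rules this out. To close the argument you need an additional input that caps the effective range of $m$ (for instance via $\sum_mP_m\le1$ combined with a sharper level-wise estimate), and that input is not supplied. The paper's own sketch is equally terse on this point, so the issue is not that you diverge from the paper but that your attempted completion of the $\lel$ step does not go through as written.
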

This follows from $\I(\sigma:\rho)\gea\log\sum_{i,j} \m(A_{i,j}/n)A_{i,j}(\sigma\otimes\rho)$, where $A_{i,j}\eqm 2^{\K(i/n)+\K(j/n)}(E_i\otimes E_j)$. The reasoning of Theorem \ref{thm:twomeasurements} can then be used.
\begin{cor}
	\label{cor:basis}
	For basis state $\ket{i}$, $\K(i/n)\lea\I(\ket{i}:\ket{i})$.
\end{cor}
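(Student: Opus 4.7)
The plan is to invoke Theorem \ref{thm:twomeasurements} with a carefully chosen POVM, namely the computational basis measurement $E=\{\ket{k}\bra{k}\}_{k<2^n}$ on $\mathcal{Q}_n$, together with the identification $\sigma=\rho=\ket{i}\bra{i}$ and $j=i$. The POVM $E$ is elementary given $n$, so the hypothesis of Theorem \ref{thm:twomeasurements} is satisfied after relativizing to $n$.

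Next I would evaluate each piece of the inequality. For the basis state $\ket{i}$, the induced semimeasure is $E\sigma(k)=|\braket{k|i}|^2=[k=i]$, hence $E\sigma(i)=E\rho(i)=1$ and the logarithmic term $\log E\sigma(i)E\rho(i)$ vanishes. The classical self-information is $\I(i:i/n)=2\K(i/n)-\K(i,i/n)\eqa \K(i/n)$, since the pair $(i,i)$ is computable from $i$ alone. Substituting into the conclusion of Theorem \ref{thm:twomeasurements} gives
\[
\K(i/n)-\K(\K(i/n)/n)\lea \I(\ket{i}:\ket{i}),
\]
from which the corollary would follow after absorbing the last term.

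The anticipated obstacle is exactly that residual term $\K(\K(i/n)/n)$, which is $O(\log \K(i/n))$ rather than $O(1)$. To match the stated $\lea$ one either has to argue that this is already absorbed into the meaning of $\lea$ in the current context, or bypass Theorem \ref{thm:twomeasurements} and directly exhibit a product test $A\otimes B\in\mathcal{C}_{\bmu\otimes\bmu}$ whose description does not require knowing $\K(i/n)$. A candidate would be something like $A\otimes B = c\,\ket{i}\bra{i}\otimes\ket{i}\bra{i}$ with $c$ chosen lower-computably from $i,n$ and comparable to $(\bra{i}\bmu\ket{i})^{-2}\eqm 2^{2\K(i/n)}$, exploiting the POVM bound $\bra{i}\bmu\ket{i}\eqm\m(i/n)$ established in the proof of Theorem \ref{thm:twomeasurements}; but making $c$ truly lower computable from $i,n$ alone (without encoding $\K(i/n)$) is the delicate point. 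For a first pass I would simply present the Theorem \ref{thm:twomeasurements} derivation above.
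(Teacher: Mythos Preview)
Your approach is exactly what the paper intends: the corollary is stated immediately after Theorem~\ref{thm:twomeasurements} and its summation corollary, with no separate proof, so the derivation is meant to be precisely the one you give---take $E$ to be the computational basis measurement, set $\sigma=\rho=\ket{i}\bra{i}$ and $j=i$, and read off $\I(i{:}i/n)\eqa\K(i/n)$ with the logarithmic term vanishing.

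Your diagnosis of the residual $\K(\K(i/n)/n)$ term is also correct, and this is a genuine precision slip in the paper rather than a defect in your argument. A direct application of Theorem~\ref{thm:twomeasurements} (or of the summation corollary, which the paper states with $\lel$) yields only $\K(i/n)\lel\I(\ket{i}:\ket{i})$; the $\lea$ in the statement of Corollary~\ref{cor:basis} is not actually justified by the machinery provided. Your sketch of a possible workaround---building a product test from $i,n$ alone whose weight scales like $\bra{i}\bmu\ket{i}^{-2}$---runs into exactly the obstacle you name: $\bra{i}\bmu\ket{i}$ is lower semicomputable, so its reciprocal is only upper semicomputable, and one cannot lower-compute the desired test without smuggling in $\K(i/n)$. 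So present the Theorem~\ref{thm:twomeasurements} derivation as you have it; the logarithmic slack is inherent in the paper's own tools.
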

Corollary \ref{cor:basis} shows that for basis states $\ket{i}$, a unitary transform that produces $\ket{i}\ket{i}$ from each $\ket{i}\ket{0}$ will duplicate at least $\K(i)$ quantum algorithmic information.

\section{Algorithmic No-Cloning Theorem}
\label{sec:nocloning}
We show that the amount of quantum algorithmic information that can be replicated is bounded by the amount of self information that a state has. As shown in Theorem \ref{thm:selfinfo}, the amount of pure states with high self information is very small. The following theorem states that information non growth holds with respect to partial traces. 
\begin{thm}[Conservation of Information, Partial Trace]
	\label{thm:sinfoconsptrace}
	For $m<n$, and the space of $n$ qubits $\mathcal{Q}_n=\mathcal{Q}_{n-m}\otimes \mathcal{Q}_{m}$, relativized to $m$ and $n$, $\I(\Tr_{\mathcal{Q}_m}\sigma:\Tr_{\mathcal{Q}_m}\rho)\lea\I(\sigma:\rho)$.
\end{thm}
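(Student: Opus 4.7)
The plan is to mirror the proof of Conservation of Randomness for Partial Trace, adapting it to the product-test definition underlying $\I$. Given any lower computable product test $A\otimes B\in\mathcal{C}_{\bmu_{n-m}\otimes\bmu_{n-m}}$ for the reduced $(n{-}m)$-qubit system, lift it to the full $n$-qubit system by padding each tensor factor with the identity on $\mathcal{Q}_m$: set $\hat A = A\otimes I_m$ and $\hat B = B\otimes I_m$, so the lift $\hat A\otimes \hat B$ retains the pure-tensor form needed to live inside some $\mathcal{C}$ set.

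The next step is to verify three invariants of this lift. First, the \emph{test condition}: using $\Tr_{\mathcal{Q}_m}\bmu_n \eqm \bmu_{n-m}$ (Theorem 14 of \citep{Gacs01}, already invoked in the addition theorem), $\Tr(\hat A\otimes \hat B)(\bmu_n\otimes\bmu_n) = \Tr(\hat A\bmu_n)\Tr(\hat B\bmu_n) \eqm \Tr(A\bmu_{n-m})\Tr(B\bmu_{n-m}) = \Tr(A\otimes B)(\bmu_{n-m}\otimes\bmu_{n-m})\leq 1$, so there is a universal positive constant $c$ independent of $A,B$ with $c\hat A\otimes \hat B\in\mathcal{C}_{\bmu_n\otimes\bmu_n}$. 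Second, the \emph{lower probability estimate}: relativized to $m,n$, a constant-length helper converts any program that lower computes $A\otimes B$ into one that lower computes $c\hat A\otimes \hat B$ by tensoring each intermediate output with $I_m$ and scaling, giving $\ml(c\hat A\otimes \hat B/n) \gem \ml(A\otimes B/(n{-}m))$. Third, the \emph{trace identity}: since $\Tr((X\otimes I_m)\xi) = \Tr X\,(\Tr_{\mathcal{Q}_m}\xi)$, one factor at a time yields $\Tr(\hat A\otimes \hat B)(\sigma\otimes\rho) = \Tr(A\otimes B)(\Tr_{\mathcal{Q}_m}\sigma\otimes \Tr_{\mathcal{Q}_m}\rho)$.

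Then I would chain the three invariants inside the sum defining $\mathfrak{C}$. Expanding $\Tr\mathfrak{C}_{\bmu_{n-m}\otimes\bmu_{n-m}}(\Tr_{\mathcal{Q}_m}\sigma\otimes\Tr_{\mathcal{Q}_m}\rho)$ as the aggregate of $\ml(A\otimes B/(n{-}m))\Tr(A\otimes B)(\Tr_{\mathcal{Q}_m}\sigma\otimes\Tr_{\mathcal{Q}_m}\rho)$ and substituting invariants (ii) and (iii) replaces every summand, up to a universal constant, by $\ml(c\hat A\otimes\hat B/n)\Tr(c\hat A\otimes\hat B)(\sigma\otimes\rho)$. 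By invariant (i) each $c\hat A\otimes\hat B$ lies in $\mathcal{C}_{\bmu_n\otimes\bmu_n}$, so the sum is bounded by $\Tr\mathfrak{C}_{\bmu_n\otimes\bmu_n}(\sigma\otimes\rho) = 2^{\I(\sigma:\rho)}$. Taking logarithms delivers the additive inequality.

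The main thing to be careful about is that the multiplicative slack $c$ absorbing the $\eqm$ in $\Tr_{\mathcal{Q}_m}\bmu_n\eqm\bmu_{n-m}$ is uniform in $A,B$: it depends only on the pair $(m,n)$, which is exactly what the relativization provides, so it can be hidden inside the $\lem$ and $\lea$ relations without trouble. No other obstacle is anticipated; the argument is structurally identical to the randomness-partial-trace proof, just with one padded identity per tensor factor instead of one overall.
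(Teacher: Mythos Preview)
Your proposal is correct and follows essentially the same route as the paper: lift each product test $A\otimes B\in\mathcal{C}_{\bmu_{n-m}\otimes\bmu_{n-m}}$ to $c(A\otimes I_m)\otimes(B\otimes I_m)\in\mathcal{C}_{\bmu_n\otimes\bmu_n}$ via Theorem~14 of \citep{Gacs01}, compare lower probabilities, apply the trace identity, and sum. Your three-invariant breakdown and your explicit remark on the uniformity of $c$ in $A,B$ are slightly more careful than the paper's presentation, but the argument is the same.
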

\begin{proof}
	There is a positive constant $c$ where if $A\otimes B$ is in $\mathcal{C}_{\bmu_{n-m}\otimes\bmu_{n-m}}$ then $c(A\otimes I_m)\otimes(B\otimes I_m)$ is in $\mathcal{C}_{\bmu_{n}\otimes\bmu_{n}}$, where $I_{m}$ is the identity operator over $\mathcal{Q}_{m}$. Using Theorem 14 of \citep{Gacs01} which states $\Tr_{\mathcal{Q}_m}\bmu_{n}\eqm\bmu_{n-m}$, we have that $1\geq\Tr(A\otimes B)(\bmu_{n-m}\otimes \bmu_{n-m})=\Tr(A\bmu_{n-m})\otimes (B\bmu_{n-m})\geq\Tr c((A\otimes I_m)\bmu_{n})\otimes((B\otimes I_m)\bmu_{n})=\Tr c((A\otimes I_m)\otimes (B\otimes I_m))(\bmu_n\otimes\bmu_n)$. It is easy to see that $\ml(A\otimes B)\lem \ml((A\otimes I_m)\otimes (B\otimes I_m))$. So
	\begin{align*}
	&\;\;\;\;\;\I(\Tr_{\mathcal{Q}_m}\sigma:\Tr_{\mathcal{Q}_m}\rho)\\
	&=\log\sum_{A\otimes B\in\mathcal{C}_{\bmu_{n-m}\otimes\bmu_{n-m}}}\ml(A\otimes B)\Tr(A\otimes B)(\Tr_{\mathcal{Q}_m}\sigma\otimes\Tr_{\mathcal{Q}_m}\rho)\\
	&=\log\sum_{A\otimes B\in\mathcal{C}_{\bmu_{n-m}\otimes\bmu_{n-m}}}\ml(A\otimes B)\Tr(A\Tr_{\mathcal{Q}_m}\sigma)\otimes(B\Tr_{\mathcal{Q}_m}\rho)\\
	&=\log\sum_{A\otimes B\in\mathcal{C}_{\bmu_{n-m}\otimes\bmu_{n-m}}}\ml(A\otimes B)\Tr((A\otimes I_m)\sigma)\otimes((B\otimes I_m)\rho)\\
	&\lea\log\sum_{A\otimes B\in\mathcal{C}_{\bmu_{n-m}\otimes\bmu_{n-m}}}\ml(c(A\otimes I_m)\otimes (B\otimes I_m))\Tr c((A\otimes I_m)\otimes(B\otimes I_m))(\sigma\otimes\rho)\\
	&\lea \I(\sigma:\rho).
	\end{align*}
\end{proof}
\begin{cor}
\label{cor:infopartialtrace}
	For a density matrix $\sigma$ over the space of $2n$ qubits $\mathcal{Q}_{2n}= \mathcal{Q}_{n}\otimes\mathcal{Q}_{n}=\mathcal{Q}_{A}\otimes\mathcal{Q}_{B}$,
	$\I(\Tr_A\sigma:\Tr_B\sigma)\lea\I(\sigma:\sigma)$.
\end{cor}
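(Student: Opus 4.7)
The plan is to reduce the corollary to Theorem \ref{thm:sinfoconsptrace}, which traces out the \emph{same} subsystem from both arguments of $\I$. Since here we need $\Tr_A$ on one side and $\Tr_B$ on the other, I would first apply a swap to the second copy of $\sigma$. Let $S$ be the unitary on $\mathcal{Q}_A \otimes \mathcal{Q}_B$ that exchanges the two $n$-qubit blocks; it is elementary with $\K(S/n) = O(1)$ since it is determined by $n$ alone. A direct computation in the computational basis yields the key identity $\Tr_A(S\sigma S^*) = \Tr_B\sigma$, because conjugation by $S$ moves the $A$-subsystem into the $B$-slot before the partial trace is taken (and we identify $\mathcal{Q}_A \cong \mathcal{Q}_B \cong \mathcal{Q}_n$).

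Given this identity, I would proceed in two short steps. First, since $\I$ is symmetric in its arguments up to $\eqa$, Theorem \ref{thm:consinfo} applied to the second argument gives
$$\I(\sigma : S\sigma S^*) \eqa \I(\sigma : \sigma).$$
Second, I would apply Theorem \ref{thm:sinfoconsptrace} with $m = n$ to the pair $(\sigma, S\sigma S^*)$, tracing out $\mathcal{Q}_A$ from each, to obtain
$$\I(\Tr_A\sigma : \Tr_A(S\sigma S^*)) \lea \I(\sigma : S\sigma S^*).$$
Substituting the identity $\Tr_A(S\sigma S^*) = \Tr_B\sigma$ on the left and the preceding display on the right produces $\I(\Tr_A\sigma : \Tr_B\sigma) \lea \I(\sigma : \sigma)$, as required.

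The only place requiring care is the bookkeeping of additive constants. The constant from Theorem \ref{thm:consinfo} depends on the elementary unitary $S$, which costs only $O(1)$ given $n$; Theorem \ref{thm:sinfoconsptrace} is already relativized to $m$ and $n$; and symmetry of $\I$ is itself up to $\eqa$. Hence every constant absorbed into $\eqa$ and $\lea$ above is genuinely $O(1)$, and the swap trick is essentially the only conceptual step, since without it the theorem as stated only handles tracing the same slot out of both arguments.
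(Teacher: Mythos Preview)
Your argument is correct, and since the paper states the corollary without proof immediately after Theorem~\ref{thm:sinfoconsptrace}, your swap reduction is essentially the intended derivation (the only alternative being to observe that the proof of Theorem~\ref{thm:sinfoconsptrace} goes through verbatim when one tensors $I_m$ on the left in one factor and on the right in the other). One small bookkeeping point: in the paper's convention $\mathcal{Q}_{2n}=\mathcal{Q}_{n}\otimes\mathcal{Q}_m$ with $\Tr_{\mathcal{Q}_m}$ removing the \emph{second} block, so Theorem~\ref{thm:sinfoconsptrace} as literally stated traces out $\mathcal{Q}_B$, not $\mathcal{Q}_A$; you should therefore trace out $\mathcal{Q}_B$ from both $\sigma$ and $S\sigma S^*$, use the companion identity $\Tr_B(S\sigma S^*)=\Tr_A\sigma$, and then invoke symmetry of $\I$ once more---an inessential adjustment to your write-up.
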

\begin{lmm}
\label{lem:mlMAB}
For lower computable semi-density $n^2\times n^2$ matrix $A$ and  elementary semi-density $n\times n$ matrix $B$, $\ml(A/2n)\lem \ml(M_{AB}/n)/\m(B/n)$.
\end{lmm}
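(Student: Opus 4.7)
The inequality rearranges to $\m(B/n)\,\ml(A/2n) \lem \ml(M_{AB}/n)$, so my strategy is to convert a short description of a lower-computing program for $A$, together with a program printing $B$, into a single lower-computing program for $M_{AB}$. Fix a constant-length combiner $c_0$ and set $p^{\ast}(s, r) := c_0\,s\,r$. When $p^{\ast}(s, r)$ is given to $U$ with $n$ on the auxiliary tape, the code $c_0$ first derives $2n$; it then invokes $U_{2n}$ on the next input bits, reading exactly $\|s\|$ bits to obtain a string $q$; it invokes $U_n$ on the remaining input, reading exactly $\|r\|$ bits to obtain an elementary matrix $B'$; finally it meta-simulates $U_{2n}(q)$ as a lower-computer, producing a monotone sequence $A_1 \leq A_2 \leq \cdots$, and emits at stage $t$ the $n \times n$ matrix whose $(i, j)$-entry is $\Tr A_t[i, j]\,B'$.

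Next I would verify that $p^{\ast}(s, r)$ really lower computes $M_{AB}$ whenever $U_{2n}(s)$ is a lower-computing program for $A$ and $U_n(r) = B$. Entrywise convergence $M_{A_t B} \to M_{AB}$ is immediate from continuity of the trace. Monotonicity reduces to the identity $\Tr M_{AB}\,C = \Tr A (C \otimes B)$ recorded earlier in the paper: for every positive semidefinite $C$,
\[
\Tr M_{(A_{t+1} - A_t)\,B}\,C \;=\; \Tr (A_{t+1} - A_t)(C \otimes B) \;\geq\; 0,
\]
since $A_{t+1} - A_t \geq 0$ and $C \otimes B \geq 0$, so $M_{A_t B} \leq M_{A_{t+1} B}$. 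The map $(s, r) \mapsto p^{\ast}(s, r)$ is injective because in step two $U_{2n}$ consumes exactly $\|s\|$ bits, cleanly separating $s$ from $r$.

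For the counting step, $\m(p^{\ast}(s, r)/n) \gem 2^{-\|p^{\ast}(s, r)\|} \eqm 2^{-\|s\| - \|r\|}$, and distinct valid pairs produce distinct lower-computers of $M_{AB}$, so
\begin{align*}
\ml(M_{AB}/n) \;&\geq\; \sum_{(s, r)\,\text{valid}} \m(p^{\ast}(s, r)/n) \\
&\gem\; \Bigl(\sum_{s\,:\,U_{2n}(s)\ \text{l.c.}\ A} 2^{-\|s\|}\Bigr) \cdot \Bigl(\sum_{r\,:\,U_n(r) = B} 2^{-\|r\|}\Bigr) \\
&=\; \Bigl(\sum_{q\ \text{l.c.}\ A} \m(q/2n)\Bigr) \cdot \m(B/n) \;=\; \ml(A/2n) \cdot \m(B/n),
\end{align*}
using $\m(q/2n) = \sum_s 2^{-\|s\|}[U_{2n}(s) = q]$ and the parallel definition of $\m(B/n)$; rearranging yields the lemma. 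The main subtlety I anticipate is the unambiguous parsing of $c_0\,s\,r$, which is handled by the prefix-free semantics of halting inputs to $U$: each of the two internal invocations reads precisely the bits of its argument, and all of the bookkeeping is absorbed into the constant $c_0$.
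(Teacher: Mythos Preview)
Your proposal is correct and follows essentially the same route as the paper: both arguments (i) verify monotonicity of $A\mapsto M_{AB}$ via the identity $\Tr M_{AB}C=\Tr A(C\otimes B)$, (ii) combine a lower-computer for $A$ with a description of $B$ into a lower-computer for $M_{AB}$, and (iii) sum the resulting weights. The only cosmetic difference is that the paper builds $q'=r\langle B\rangle q$ directly and asserts $\m(q/2n)\,\m(B/n)\lem\m(q'/n)$, whereas you add one level of indirection (programs $s$ with $U_{2n}(s)=q$) so that the Kraft sums $\sum_s 2^{-\|s\|}$ and $\sum_r 2^{-\|r\|}$ collapse visibly to $\ml(A/2n)$ and $\m(B/n)$; this makes the counting step more explicit but is not a genuinely different idea.
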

\begin{proof}
For semi density matrices, $C$, $D$ of sizes $n^2\times n^2$ and semi-density matrix $E$ of sizes $n\times n$, if $C\leq D$, then $M_{CE}\leq M_{DE}$. This is because for all positive semi-definite matrix $F$ of size $n\times n$, $\Tr M_{DE}F-\Tr M_{CE}F = \Tr (D(F\otimes E)- C(F\otimes E)) = \Tr(D-C)(F\otimes E) \geq 0$. Therefore, for every string $q$ that lower computes $A$, there is a string $q'$ of the form $r\langle B\rangle q$ that uses the helper code $r$ to take the intermediary outputs $\xi_i$ of $q$ and output an intermediary matrix $M_{\xi_iB}$. The limit of $q'$ is $M_{AB}$. So $\m(q/2n)\m(B/n)\lem \m(q'/n)$.
\begin{align*}
\ml(A/n) &= \sum\{\m(q/n):q\textrm{ lower computes } A\}\\
&\lem \sum \{\m(q'/n)/\m(B/n):q\textrm{ lower computes } A\}\\
&\lem (1/\m(B/n))\sum \{\m(s/n):s\textrm{ lower computes } M_{AB}\}\\
&\lem \ml(M_{AB}/n)/\m(B/n).
\end{align*}
\end{proof}

\begin{thm}
	\label{thm:factor}
	For density matrices $\sigma$ and $\rho$, and elementary density matrices $\nu$, and $\xi$ over $n$ qubits, $\I(\sigma\otimes\nu:\rho\otimes\xi)\lea\I(\sigma:\rho)+2\K(\nu/n)+2\K(\xi/n)$. 
\end{thm}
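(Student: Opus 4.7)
The plan is to reduce every product test $A\otimes B\in\mathcal{C}_{\bmu_{2n}\otimes\bmu_{2n}}$ for the doubled state $(\sigma\otimes\nu)\otimes(\rho\otimes\xi)$ to a product test for $(\sigma,\rho)$ using the $M$-contraction of Section \ref{sec:chain}, with the cost of stripping away $\nu$ and $\xi$ absorbed into the Kolmogorov penalties $2\K(\nu/n)+2\K(\xi/n)$. The structure parallels the proof of Theorem \ref{thm:consinfo}, but with a swap-plus-contract operation in place of a unitary conjugation, and the two factors of two arise because the penalty is paid once in the trace rescaling and once in the $\ml$ comparison.

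\textbf{Main calculation.} First, applying the identity $\Tr A(C\otimes\nu)=\Tr M_{A\nu}C$ to each factor yields
$$2^{\I(\sigma\otimes\nu:\rho\otimes\xi)}=\sum_{A\otimes B\in\mathcal{C}_{\bmu_{2n}\otimes\bmu_{2n}}}\ml(A\otimes B/2n)\Tr(M_{A\nu}\otimes M_{B\xi})(\sigma\otimes\rho).$$
Second, combining universality $\nu\lem 2^{\K(\nu/n)}\bmu_n$, $\xi\lem 2^{\K(\xi/n)}\bmu_n$, and $\bmu_n\otimes\bmu_n\lem\bmu_{2n}$ gives
$$\Tr(M_{A\nu}\otimes M_{B\xi})(\bmu_n\otimes\bmu_n)=\Tr A(\bmu_n\otimes\nu)\Tr B(\bmu_n\otimes\xi)\lem 2^{\K(\nu/n)+\K(\xi/n)}\Tr(A\otimes B)(\bmu_{2n}\otimes\bmu_{2n})\leq 2^{\K(\nu/n)+\K(\xi/n)},$$
so $c(M_{A\nu}\otimes M_{B\xi})\in\mathcal{C}_{\bmu_n\otimes\bmu_n}$ for the rational $c=2^{-\K(\nu/n)-\K(\xi/n)-O(1)}$. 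Third, to bound the weight $\ml(A\otimes B/2n)$, reorganize $A\otimes B$ via the qubit-swap $S$ that exchanges the $\nu$ and $\rho$ tensor factors: the reorganized matrix $\tilde T:=S^{-1}(A\otimes B)S$ lives on (the $\sigma,\rho$ qubits) $\otimes$ (the $\nu,\xi$ qubits), and a direct index computation shows $M_{\tilde T,\nu\otimes\xi}=M_{A\nu}\otimes M_{B\xi}$. Lemma \ref{lem:mlMAB} applied to the lower computable $\tilde T$ with the elementary $\nu\otimes\xi$ then yields
$$\ml(A\otimes B/2n)\eqm\ml(\tilde T/2n)\lem\ml(M_{A\nu}\otimes M_{B\xi}/n)/\m(\nu\otimes\xi/n)\lem 2^{\K(\nu/n)+\K(\xi/n)}\ml(M_{A\nu}\otimes M_{B\xi}/n),$$
the last step using $\m(\nu\otimes\xi/n)\gem\m(\nu/n)\m(\xi/n)$ from the classical chain rule.

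\textbf{Assembly.} Insensitivity of $\ml$ to fixed rational rescaling gives $\ml(c(M_{A\nu}\otimes M_{B\xi})/n)\eqm\ml(M_{A\nu}\otimes M_{B\xi}/n)$. Substituting back into the unfolded sum,
\begin{align*}
2^{\I(\sigma\otimes\nu:\rho\otimes\xi)}&\lem 2^{2\K(\nu/n)+2\K(\xi/n)}\sum_{A\otimes B}\ml(c(M_{A\nu}\otimes M_{B\xi})/n)\Tr c(M_{A\nu}\otimes M_{B\xi})(\sigma\otimes\rho)\\
&\lem 2^{2\K(\nu/n)+2\K(\xi/n)}\Tr\mathfrak{C}_{\bmu_n\otimes\bmu_n}(\sigma\otimes\rho)=2^{\I(\sigma:\rho)+2\K(\nu/n)+2\K(\xi/n)},
\end{align*}
the final inequality because each $c(M_{A\nu}\otimes M_{B\xi})$ is a term of the $\mathfrak{C}_{\bmu_n\otimes\bmu_n}$ sum; collisions under the non-injective reduction $(A,B)\mapsto(M_{A\nu},M_{B\xi})$ only help, since $\mathfrak{C}_{\bmu_n\otimes\bmu_n}$ already aggregates all lower-computation programs for each image term. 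Taking logs delivers the theorem.

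\textbf{Main obstacle.} The technical crux is the $\ml$ comparison in the third step: carrying Lemma \ref{lem:mlMAB} through the qubit reordering $S$ and verifying that the tensor contraction $M_{\tilde T,\nu\otimes\xi}$ factors cleanly as $M_{A\nu}\otimes M_{B\xi}$. Done correctly, this step produces a single factor $2^{\K(\nu/n)+\K(\xi/n)}$, which combines with the identical factor from the trace rescaling $c^{-1}$ to give the $2\K(\nu/n)+2\K(\xi/n)$ of the statement; the alternative route of trying to bound $\ml(A\otimes B/2n)$ pointwise by $\ml(A/2n)\ml(B/2n)$ is less clean and appears to introduce logarithmic corrections.
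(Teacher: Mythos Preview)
Your proposal is correct and follows essentially the same route as the paper: contract each product test $E\otimes F\in\mathcal{C}_{\bmu_{2n}\otimes\bmu_{2n}}$ against $(\nu,\xi)$ via the $M$-operator, show the rescaled contraction $c(M_{E\nu}\otimes M_{F\xi})$ lands in $\mathcal{C}_{\bmu_n\otimes\bmu_n}$, and invoke Lemma~\ref{lem:mlMAB} for the $\ml$ comparison, with the two penalty factors coming from the rescaling $c^{-1}$ and the $\ml$ bound respectively. Your qubit-swap $S$ and the identity $M_{\tilde T,\nu\otimes\xi}=M_{A\nu}\otimes M_{B\xi}$ make the application of Lemma~\ref{lem:mlMAB} explicit where the paper simply cites the lemma, so your write-up is if anything a clarification of the same argument.
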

\begin{proof}
Let $c=\Theta(1)2^{-\K(\nu/n)-\K(\xi/n)}$. If $E\otimes F\in\mathcal{C}_{\bmu_{2n}\otimes\bmu_{2n}}$ then $c(M_{E\nu}\otimes M_{F\xi})\in\mathcal{C}_{\bmu_{n}\otimes\bmu_{n}}$, where $M$ is the matrix defined in Section \ref{sec:chain}. This is because $c(M_{E\nu}\otimes M_{F\xi})$ is lower computable and $\Tr c(M_{E\nu}\otimes M_{F\xi}))(\bmu_n\otimes \bmu_n)=c\Tr(E\otimes F)((\bmu_n\otimes\nu)\otimes(\bmu_n\otimes\xi))\leq \Tr(E\otimes F)(\bmu_{2n}\otimes\bmu_{2n})\leq 1$. Furthermore, we use lemma \ref{lem:mlMAB}, where for lower computable $A$, elementary $B$, $\ml(A)\lem \ml(M_{AB})/\m(B)$. So
\begin{align*}
&\;\;\;\;\;\I(\sigma\otimes\nu:\rho\otimes\xi) \\
&= \log\sum_{E\otimes F\in\mathcal{C}_{\bmu_{2n}\otimes\bmu_{2n}}}\ml(E\otimes F/2n)\Tr(E\otimes F)((\sigma\otimes\nu)\otimes(\rho\otimes\xi))\\
&= \log\sum_{E\otimes F\in\mathcal{C}_{\bmu_{2n}\otimes\bmu_{2n}}}\ml(E\otimes F/2n)\Tr(E(\sigma\otimes\nu)\otimes F(\rho\otimes\xi))\\
&= \log\sum_{E\otimes F\in\mathcal{C}_{\bmu_{2n}\otimes\bmu_{2n}}}\ml(E\otimes F/2n)\Tr(M_{E\nu}\sigma\otimes M_{F\xi}\rho)\\
&\lea \log\sum_{E\otimes F\in\mathcal{C}_{\bmu_{2n}\otimes\bmu_{2n}}}\ml(c(M_{E\nu}\otimes M_{F\xi})/n)\Tr(c(M_{E\nu}\sigma\otimes M_{F\xi}\rho))+2\K(\nu/n)+2\K(\xi/n)\\
&\lea \I(\sigma:\rho)+2\K(\nu/n)+2\K(\xi/n).
\end{align*}
\end{proof}
\begin{thm}
\label{thm:nocloning}
	Let $C\ket{\psi}\ket{0^n}=\ket{\phi}\ket{\varphi}$, where $C$ is an elementary unitary transform. Relativized to $C$, $\I(\ket{\phi}:\ket{\varphi})\lea\I(\ket{\psi}:\ket{\psi})$.
\end{thm}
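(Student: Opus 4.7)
The plan is to transport the self--information of $\ket{\psi}$ to the information between $\ket{\phi}$ and $\ket{\varphi}$ through three conservation inequalities already proved in the paper. First I inflate $\ket{\psi}$ by the elementary ancilla $\ket{0^n}$; then I push the doubled state through two copies of $C$; then I extract $\ket{\phi}$ and $\ket{\varphi}$ as partial traces of the resulting product pure state.

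First I would apply Theorem \ref{thm:factor} with $\sigma=\rho=\ket{\psi}\bra{\psi}$ and $\nu=\xi=\ket{0^n}\bra{0^n}$. Since the zero state is elementary and $\K(\ket{0^n}\bra{0^n}/n)\eqa 0$, the $2\K(\nu/n)+2\K(\xi/n)$ overhead collapses to $O(1)$, giving
\[
\I(\ket{\psi}\ket{0^n}:\ket{\psi}\ket{0^n})\lea \I(\ket{\psi}:\ket{\psi}).
\]

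Next, exploiting that we are relativized to the elementary unitary $C$, I would apply Theorem \ref{thm:consinfo} twice, once to each argument of $\I(\cdot:\cdot)$ (using symmetry of $\I$ in between), so that both copies of $\ket{\psi}\ket{0^n}\bra{\psi}\bra{0^n}$ transform into $\ket{\phi}\ket{\varphi}\bra{\phi}\bra{\varphi}$. This yields
\[
\I(\ket{\phi}\ket{\varphi}:\ket{\phi}\ket{\varphi})\eqa \I(\ket{\psi}\ket{0^n}:\ket{\psi}\ket{0^n}).
\]

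Finally, the state $\ket{\phi}\ket{\varphi}\bra{\phi}\bra{\varphi}$ is a product density matrix on $\mathcal{Q}_A\otimes\mathcal{Q}_B$, where $A$ and $B$ are the factor spaces of $\ket{\phi}$ and $\ket{\varphi}$; its marginals under $\Tr_A$ and $\Tr_B$ are precisely $\ket{\varphi}\bra{\varphi}$ and $\ket{\phi}\bra{\phi}$. Corollary \ref{cor:infopartialtrace} then delivers
\[
\I(\ket{\phi}:\ket{\varphi})\lea \I(\ket{\phi}\ket{\varphi}:\ket{\phi}\ket{\varphi}).
\]
Chaining the three inequalities proves the theorem.

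The main obstacle is not any new estimate but the bookkeeping of $O(1)$ constants and of the spaces involved. I have to ensure that relativization to $C$ also determines the dimension of each tensor factor (so that the ancilla $\ket{0^n}$ is trivial relative to the available side information), and that the $\mathcal{Q}_A\otimes\mathcal{Q}_B$ decomposition required by Corollary \ref{cor:infopartialtrace} matches the natural split of the image $\ket{\phi}\ket{\varphi}$. Once these alignments are verified, the three additive $O(1)$'s from Theorems \ref{thm:factor}, \ref{thm:consinfo}, and Corollary \ref{cor:infopartialtrace} genuinely combine into a single additive constant, as the statement requires.
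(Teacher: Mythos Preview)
Your proposal is correct and follows exactly the same three-step chain as the paper's own proof, which simply writes $\I(\ket{\phi}:\ket{\varphi})\lea\I(\ket{\phi}\ket{\varphi}:\ket{\phi}\ket{\varphi})\lea\I(\ket{\psi}\ket{0^n}:\ket{\psi}\ket{0^n})\lea\I(\ket{\psi}:\ket{\psi})$ and cites Corollary~\ref{cor:infopartialtrace} and Theorems~\ref{thm:consinfo} and~\ref{thm:factor}. Your explicit remark that Theorem~\ref{thm:consinfo} must be invoked twice (once per argument, using symmetry) is a detail the paper leaves implicit.
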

\begin{proof}
We have the inequalities $\I(\ket{\phi}:\ket{\varphi})\lea\I(\ket{\phi}\ket{\varphi}):\ket{\phi}\ket{\varphi})\lea\I(\ket{\psi}\ket{0^n}:\ket{\psi}\ket{0^n})\lea\I(\ket{\psi}:\ket{\psi})$,
supported by Corollary \ref{cor:infopartialtrace} and Theorems \ref{thm:consinfo} and \ref{thm:factor}.
\end{proof}
\section{Discussion}
There are still many open problems with respect to algorithmic quantum deficiency of randomness and information. One question is whether a quantum state has maximized information with itself. More specifically, given density matrices $\rho$ and $\sigma$, is it always the case that $\I(\rho:\rho)\gea\I(\rho:\sigma)$? Is it true that every quantum state is typical of $\bmu$? For all density matrices $\rho$, is it the case that $\d(\rho|\bmu)=O(1)$? Since $\d(\rho|\bmu)^2\lea\I(\rho:\rho)$, we have that $\d(\rho|\bmu)\lea \sqrt{2n}$. Conservation of randomness and information has been proven to hold over unitary transforms and partial traces. Can conservation of randomness and information be proven with respect to quantum operations?

\bibliographystyle{plainnat}

%\bibliography{q}

\end{document}